\newcommand {\N}{{\rm{I\!N}}}
\newcommand{\R}{\mbox{$\mathbb R$}}
\begin{document}
\title{Cut-off theorems for the $PV$-model.}

\author{Lisbeth Fajstrup}
\institute{Department of Mathematical Sciences, Aalborg University, Skjernvej 4A, 9220 Aalborg {\O}st, Denmark.
\email{fajstrup@math.aau.dk}}

\maketitle

\begin{abstract}
  For a $PV$ thread $T$ which accesses a set $\mathcal{R}$ of resources, each with a maximal capacity $\kappa:\mathcal{R}\to\mathbb{N}$, the PV-program $T^n$, where $n$ copies of $T$ are run in parallel, is deadlock free for all $n$ if and only if $T^M$ is deadlock free where $M$ is the sum of the capacities of the shared resources $M=\Sigma_{r\in\mathcal{R}}\kappa(r)$. This is a sharp bound: For all $\kappa:\mathcal{R}\to\mathbb{N}$ and finite $\mathcal{R}$ there is a thread $T$ using these resources such that $T^M$ has a deadlock, but $T^n$ does not for $n<M$.\\ Moreover, we prove a more general theorem for a set of different threads sharing resources $\mathcal{R}$: There are no deadlocks in $p=T1|T2|\cdots |Tn$ if and only if there are no deadlocks in $T_{i_1}|T_{i_2}|\cdots |T_{i_M}$ for any $M$-element subset $\{i_1,\ldots,i_M\}\subset [1:n]$.\\ For $\kappa(r)\equiv 1$, $T^n$ is serializable, i.e., all executions are equivalent to serial executions, for all $n$ if and only if $T^2$ is serializable. For general capacities, we define local obstructions to serializability - if no such obstruction exists, the program is serializable. There is no local obstruction to serializability in $T^n$ for all $n$ if and only if there is no local obstruction to serializability in $T^M$ for $M=\Sigma_{r\in\mathcal{R}}\kappa(r)+1$. The obstructions may be found using a deadlock algorithm in $T^{M+1}$. There is a generalization to  $p=T1|T2|\cdots |Tn$: If there are no local obstructions to serializability in any of the  sub programs, $T_{i_1}|T_{i_2}|\cdots |T_{i_M}$,  then $p$ is serializable. 
\end{abstract}

\maketitle

\section{Introduction}
We address the following: Verify properties of parallel programs in a setting, where users decide how many threads are run. In the case where an unknown number, $n$, of copies of the same thread $T$ may be run in parallel with itself as a program $T^n$, a cut-off result states that some property  holds for all $T^n$ if and only if it holds for $T^M$ for a fixed $M$ called the cut-off. Hence verification is required for only that  case. This is a very simple instance of parameterized verification in the sense of e.g. \cite{AbdullaDelzanno} - with $n$ as the parameter, verification is only needed for $n=M$.

Here, $T$ is a $PV$-program, it locks and releases resources from a set of resources $\mathcal{R}$, with a capacity function $\kappa:\mathcal{R}\to\mathbb{N}$ which gives an upper bound for how many of the parallel threads may hold a lock on the resource at a time.  The properties of $T^n$ investigated here are deadlocks and serializability.

To give the intuition and also the geometric interpretation of $PV$-programs in the sense of \cite{D68} and \cite{CR87}, we will consider some examples here. The formal definitions are in section 2. A thread $T1=Pa.Pb.Vb.Va.Pc.Vc$ will acces the resources $a,b,c$  and set a lock on them as follows: $Pr$ is a request to lock resource $r$. In a parallel program, the resource may be locked by other threads and $T$ may have to wait. If the lock is granted, $T1$ proceeds with a lock on $r$. $Vr$ is release of resource $r$. Hence, execution of this thread means lock $a$, lock $b$, release $b$, release $a$, lock $c$, release $c$. These are not atomic actions. They are only recording the synchronization of threads which share resources. See section 2 for definitions of valid threads and valid $PV$-programs.

The $PV$ formalism stems from E.~W.~ Dijkstra around 1963  \cite{D64}, where $P$ corresponds to the Dutch word Probeer - try (to access a resource), once access is obtained, lower a value by $1$ and $V$ is Verhoog, increase - which refers to increasing a value by $1$. In our notation, the value is  the number of threads which currently have access to the given resource, and $P$ increases it by one while $V$ decreases it by $1$. This is of course just a sign convention. 

The \emph{geometric model} of a thread $T=w_1.w_2.\ldots .w_l$ of \emph{length} $l$  is the interval $[0,l+1]$ and an execution of $T$ is a continuous, surjective, non-decreasing  map $\gamma: [0,1]\to [0,l+1]$ - thus allowing for all possible subdivisions of the non atomic actions - all points in the interval $[0,l+1]$ are considered states of $T$. The domain of $\gamma$, the interval $[0,1]$, is a choice which is not essential for the properties discussed here. 

The \emph{geometric model} of a parallel program $T1|T2|\cdots|Tn$ where $Ti$ has length $l_i$ is a subset $X$ of the hyperrectangle $[0,l_1+1]\times [0,l_2+1]\times\ldots [0,l_n+1]\subset \mathbb{R}^n$ comprised by all states $(x_1,\ldots,x_n)$ for which resources are at most locked to their capacity. See Fig.\ref{fig:firstPV} and Fig. \ref{fig:serialize}. The origo is denoted $\perp$ and the point $(l_1+1,l_2+1,\ldots,l_n+1)$ is denoted $\top$. 

\emph{Executions:} An execution of $T1|T2|\cdots|Tn$ is a continuous map $\gamma:[0,1]\to X$, $\gamma(t)=(\gamma_1(t),\ldots,\gamma_n(t))$ such that $\gamma(0)=\perp$ and $\gamma(1)=\top$ and such that $t,s\in [0,1], t\leq s$ implies $\gamma_i(t)\leq\gamma_i(s)$ for all $i$. $\gamma$ is an \emph{execution path}. Fig.~\ref{fig:firstPV} b), Fig.~\ref{fig:serialize} a). Executions are hence not just interleavings - see Fig.~\ref{fig:firstPV} b). The execution corresponding to the dotted black path shows $T1$ executing  $Pa.Pb.Vb.Va$ while $T2$ executes $Pb$ - it is truly concurrent. 

\emph{Deadlock:} Let $T2=Pb.Pa.Va.Vb.Pc.Vc$ and consider the parallel program $p=T1|T2$. If  $a$ and $b$ have capacity $1$, there is a deadlock:  The  state $(Pb,Pa)$ where $T1$ requests $b$ and $T2$ request $a$ is a  \emph{deadlock state}, since $a$ is locked by $T1$ and $b$ is locked by $T2$. Neither can proceed and nor can any concurrent execution Lem.~\ref{lem:deadlock}. This state is also denoted $(2,2)$ since it is the second $PV$ action of both threads. See Fig.~\ref{fig:firstPV} a).

For deadlocks, our cut-off theorem, Thm.~\ref{thm:deadgeneral} and Cor.~\ref{cor:deadsymmetric} states:  $T^n$ is deadlock free for all $n$ if and only if $T^M$ is deadlock free for $M=\Sigma_{r\in\mathcal{R}}\kappa(r)$. More generally: A program $p=T1|T2|\cdots |Tn$ is deadlock free if and only if all  sub programs $T_{i_1}|T_{i_2}|\cdots |T_{i_M}$ of $M$ of the threads comprising $p$ are deadlock free.

\begin{figure}[h]

\begin{subfigure}{.32\textwidth}
\begin{tikzpicture}[scale=0.4]
\draw [->] (0,0)--(7,0);
\draw [->] (0,0)--(0,7);

\filldraw[fill=white!30!blue,opacity=.8,draw=black] (1,2) rectangle (4,3);
\filldraw[fill=white!30!blue,opacity=.8,draw=black] (2,1) rectangle (3,4);
\filldraw[fill=white!30!blue,opacity=.8,draw=black] (5,5) rectangle (6,6);

\filldraw[fill=red] (2,2) circle (.08cm);
\draw (1,0) node[anchor=north]{$Pa$};
\draw (2,0) node[anchor=north]{$Pb$};
\draw (3,0) node[anchor=north]{$Vb$};
\draw (4,0) node[anchor=north]{$Va$};
\draw (5,0) node[anchor=north]{$Pc$};
\draw (6,0) node[anchor=north]{$Vc$};

\draw (0,1) node[anchor=east]{$Pb$};
\draw (0,2) node[anchor=east]{$Pa$};
\draw (0,3) node[anchor=east]{$Va$};
\draw (0,4) node[anchor=east]{$Vb$};
\draw (0,5) node[anchor=east]{$Pc$};
\draw (0,6) node[anchor=east]{$Vc$};

\draw (7,0) node[anchor=west]{$T1$};
\draw (0,7) node[anchor=south]{$T2$};
\foreach \x in {1,...,6}
\draw (\x cm,1pt)--(\x cm,-1pt);
\foreach \y in {1,...,6}
\draw (-1pt,\y cm)--(1pt,\y cm);

\end{tikzpicture}
\caption{The red dot indicates a deadlock - there is no way of proceeding. }
\end{subfigure}
\hspace{2mm}
\begin{subfigure}{.32\textwidth}
\begin{tikzpicture}[scale=0.4]
\draw [->] (0,0)--(7,0);
\draw [->] (0,0)--(0,7);

\filldraw[fill=white!30!blue,opacity=.8,draw=black] (1,2) rectangle (4,3);
\filldraw[fill=white!30!blue,opacity=.8,draw=black] (2,1) rectangle (3,4);
\filldraw[fill=white!30!blue,opacity=.8,draw=black] (5,5) rectangle (6,6);

\filldraw[fill=red] (2,2) circle (.08cm);
\draw (1,0) node[anchor=north]{$Pa$};
\draw (2,0) node[anchor=north]{$Pb$};
\draw (3,0) node[anchor=north]{$Vb$};
\draw (4,0) node[anchor=north]{$Va$};
\draw (5,0) node[anchor=north]{$Pc$};
\draw (6,0) node[anchor=north]{$Vc$};

\draw (0,1) node[anchor=east]{$Pb$};
\draw (0,2) node[anchor=east]{$Pa$};
\draw (0,3) node[anchor=east]{$Va$};
\draw (0,4) node[anchor=east]{$Vb$};
\draw (0,5) node[anchor=east]{$Pc$};
\draw (0,6) node[anchor=east]{$Vc$};

\draw (7,0) node[anchor=west]{$T1$};
\draw (0,7) node[anchor=south]{$T2$};
\foreach \x in {1,...,6}
\draw (\x cm,1pt)--(\x cm,-1pt);
\foreach \y in {1,...,6}
\draw (-1pt,\y cm)--(1pt,\y cm);
\draw[thick,dotted,rounded corners](0,0)--(4.5,1)--(4.5,6)--(7,7);
\draw[thick,dashed,rounded corners](0,0)--(1,4.5)--(6,4.5)--(7,7);
\draw[thick,dashed,green,rounded corners](0,0)--(0.1,7)--(7,7);
\draw[thick,dotted,green,rounded corners](0,0)--(7,0.1)--(7,7);
\end{tikzpicture}
\caption{ Four execution paths. }
\end{subfigure}
\begin{subfigure}{.32\textwidth}
\begin{tikzpicture}[scale=0.4]
\draw [->] (0,0)--(7,0);
\draw [->] (0,0)--(0,7);

\filldraw[fill=white!30!blue,opacity=.8,draw=black] (1,2) rectangle (4,3);
\filldraw[fill=white!30!blue,opacity=.8,draw=black] (2,1) rectangle (3,4);
\filldraw[fill=white!30!blue,opacity=.8,draw=black] (5,5) rectangle (6,6);

\filldraw[fill=red] (2,2) circle (.08cm);
\draw (1,0) node[anchor=north]{$Pa$};
\draw (2,0) node[anchor=north]{$Pb$};
\draw (3,0) node[anchor=north]{$Vb$};
\draw (4,0) node[anchor=north]{$Va$};
\draw (5,0) node[anchor=north]{$Pc$};
\draw (6,0) node[anchor=north]{$Vc$};

\draw (0,1) node[anchor=east]{$Pb$};
\draw (0,2) node[anchor=east]{$Pa$};
\draw (0,3) node[anchor=east]{$Va$};
\draw (0,4) node[anchor=east]{$Vb$};
\draw (0,5) node[anchor=east]{$Pc$};
\draw (0,6) node[anchor=east]{$Vc$};

\draw (7,0) node[anchor=west]{$T1$};
\draw (0,7) node[anchor=south]{$T2$};
\foreach \x in {1,...,6}
\draw (\x cm,1pt)--(\x cm,-1pt);
\foreach \y in {1,...,6}
\draw (-1pt,\y cm)--(1pt,\y cm);
\draw[thick,dotted,green,rounded corners](0,0)--(6,0.1)--(6,1)--(7,1)--(7,7);
\draw[thick,dotted,green,rounded corners](0,0)--(7,0.1)--(7,7);
\draw[thick,dotted,red,rounded corners](0,0)--(4.9,0.1)--(4.9,6.1)--(6.1,6.1)--(7,7);
\draw[thick,dotted,red,rounded corners](0,0)--(4.9,0.1)--(4.9,4.9)--(6.1,4.9)--(6.1,6.1)--(7,7);
\end{tikzpicture}
\caption{The two green execution paths are equivalent, the red executions are not.   }\label{fig:elemhtpy}
\end{subfigure}

\caption{  The geometric model of $Pa.Pb.Vb.Va.Pc.Vc|Pb.Pa.Va.Vb.Pc.Vc$ when all resources have capacity $1$. A point $(x_1,x_2)$ is a joint state of $T1$ and $T2$. The blue regions are states where more than one thread hold a lock on a resource - these are not allowed, when the resources have capacity $1$. If e.g. resource $c$ had capacity $2$ or more, the states in the upper right hand blue square would all be allowed. In b)  the two green paths are serial executions - one thread executes before the other. There are four equivalence classes of executions - represented by these four paths. In c) The green paths are equivalent - indication that $Vc^1.Pb^2$ is equivalent to $Pb^2.Vc^1$  all states in the square between $(6,0)$ and $(7,1)$ are allowed. See Ex.~\ref{ex:elemhtpy}. The two red execution paths are not. There is no homotopy between them and hence no directed homotopy. The blue "hole" of states not allowed prevents that. The red execution path to the right and up gives $T1$ access to $c$  before $T2$. Up and then right gives $T2$ acces first. }\label{fig:firstPV} 
\end{figure}

\begin{samepage}
\emph{Equivalence of executions}: Two execution paths $\gamma,\mu:[0,1]\to X$ are equivalent if there is a continuous map $H:[0,1]\times [0,1]\to X$ such that 
\begin{enumerate}
\item For all $s$, $H(0,s)=\perp$ and $H(1,s)=\top$.
\item For all $t$ $\gamma(t)=H(t,0)$ and $\mu(t)=H(t,1)$
\item For fixed $s_0$, $\eta_{s_0}:[0,1]\to X$ given by $\eta_{s_0}(t)=H(t,s_0)$ is  an execution (the coordinate functions are non decreasing).

\end{enumerate}
\end{samepage}
Conditions 1) and 2) ensure that $H$ is a \emph{homotopy} - condition 3) makes it a directed homotopy - it prescribes a continuous deformation of the execution path $\gamma$ to the execution path $\mu$ through execution paths. See Fig.~\ref{fig:homotopy}.  In e.g.  \cite{Papa79} and \cite{Pratt1}, this was mistakingly stated as  the executions paths being homotopic, i.e., $H(s_0,t)$ may not be execution paths. In \cite{LFEGMRAlgebraic} we give an example of a program and executions with homotopic but not directed homotopic execution paths which have different outcome with the same input. See Fig.~\ref{fig:localchoice}. This does not occur for directed homotopic execution paths.  For this to occur, there has to be at least three threads and resources cannot all have capacity $1$. In  Fig.~\ref{fig:localchoice} all execution paths  are homotopic but not all are directed homotopic.

\emph{Reparametrization} of an execution gives an equivalence: Let $\alpha: [0,1]\to [0,1]$ be surjective and non-decreasing and $\gamma:[0,1]\to X$ an execution path. The execution $\mu(t)=\gamma\circ \alpha(t)$ is equivalent to $\gamma$. Indeed $H(s,t)=\gamma((1-s)t+s\alpha(t))$ provides a directed homotopy. Hence, the image of $\gamma$ is the key information about an execution. 

\begin{example}\label{ex:elemhtpy}
Let $\gamma,\mu:[0,1]\to [0,1]\times [0,1]$ be execution paths of two threads with no shared resources or all resouces have capacity at least $2$ - all states are allowed. Then $\eta(t)=\max(\gamma(t),\mu(t))$ is an execution path and $H(s,t)=s\eta(t)+(1-s)\gamma(t)$ provides an equivalence of $\gamma$ and $\eta$. Similarly, $\mu$ is equivalent to $\eta$. All execution paths are equivalent as indeed they should be. In Fig.~\ref{fig:elemhtpy}, the equivalence of the two green paths can be constructed similarly. 
\end{example}

\emph{Subdivision and comparison:}  Executions which are a sequence of actions such as in Fig.~\ref{fig:serialize} a) the dotted path \[P^1c.V^1c.P^1a.V^1a.P^3a.V^3a.P^2c.V^2c.P^2b.V^2b.P^3b.V^3b\] 
 where $P^ir$ and $V^i r$ denote actions by thread $i$, and 
 \[T1.T2.T3=P^1c.V^1c.P^1a.V^1a.P^2c.V^2c.P^2b.V^2b.P^3a.V^3a.P^3b.V^3b\]  
are equivalent  if and only if there is a sequence of transpositions such as $ ....V^3a.P^2c....$ to $ .....P^2c.V^3a....$ which lead to this equivalence and such that the states in the rectangle between $(5,0,1)$ and $(5,1,2)$ are all allowed, see \cite{Fajstrup2005}. Then there is an \emph{elementary homotopy} as in  Ex.~\ref{ex:elemhtpy} and all partial execution paths from $(5,0,1)$ to $(5,1,2)$ are equivalent. In particular for subdivision of the actions such as e.g. $V^3a=u1.u2$ and $P^2c= w1.w2$, executions with $...u1.w1.u2.w2...$ are equivalent to other executions with this smaller stepsize; this also follows from \cite{Fajstrup2005}.  Moreover, we want to reason about  complicated parallel executions where perhaps in that example $P^3a.V^3 a$ executes while all actions from $T2$ execute -e.g. the dash-dotted path in  Fig.~\ref{fig:serialize} a). Such an execution path could come from  subdividing $P^3a.V^3 a$ into a total of $4$ actions $w1.w2.w3.w4$ and the concurrent execution of $w1|P^2c$,  $w2|V^2c$, $w3|P^2b$, $w4|V^2b$, each represented by a diagonal in a rectangle. This is why, we consider all continuous non-decreasing paths as executions. For a discussion about how this is a model of true concurrency which is robust to subdivision  and for comparison to other models, see e.g. \cite{Fajstrup2016} chapter 2.

\emph{Serializability:} An execution of $T1|T2|\cdots |Tn$ is serial if it executes one thread after the other - $Ti_1.Ti_2.\ldots .Ti_n$ where $i_1,\ldots,i_n$ is a permutation of $1,\ldots,n$. An execution of a parallel program is \emph{serializable}, if it is equivalent to a serial execution. A program is serializable if all executions of it are serializable. This is a strong requirement which stems from safety of databases. If all executions are equivalent to serial executions, verification is only needed for  serial executions where no conflicts occur wrt. shared resources. In database theory, it is well known that "two phase locking is safe" - if each $PV$-thread $Ti$ is two phased - lock all resources needed before starting to release any - then the program $p=T1|T2|\cdots |Tn$  is serializable (safe in databases is our serializable). But two phase locking is not necessary.

 In \cite{Papa79} an $O(s \log s \log\log s)$ algorithm to test for serializability in the case of two threads (called transactions there) with a total of $s$ steps (the number of $P$ and $V$),  is constructed based on the geometric interpretation, i.e.,  executions are paths  in the plane, non-decreasing in both coordinates and avoiding a union of  rectangles. With $n$ threads, the execution paths are in $\mathbb{R}^n$ and have to avoid a more complicated subset.

An \emph{obstruction to serializability} is a property of a state/point, such that if there are no obstructions at any state, then the program is serializable. We define such an obstruction,  a local choice point Def.~\ref{def:localchoicepoint}: a state, where a locally irreversible choice must be made. This is a refinement of \cite{MRMSCS}.

\begin{figure}
\centering
\begin{subfigure}{.45\textwidth}
\tdplotsetmaincoords{70}{110}
\begin{tikzpicture}[tdplot_main_coords, scale=0.5]
\draw[thick,->] (0,0,0) -- (5,0,0) node[anchor=north east]{$T1$};
\draw[thick,->] (0,0,0) -- (0,5,0) node[anchor=north west]{$T2$};
\draw[thick,->] (0,0,0) -- (0,0,5) node[anchor=south]{$T3$};
\draw[thick] (1,1,0)--(2,1,0)--(2,2,0)--(1,2,0)--(1,1,0);
\draw[thick] (1,1,5)--(2,1,5)--(2,2,5)--(1,2,5)--(1,1,5);
\draw[thick] (1,1,0)--(1,1,5);
\draw[thick] (2,1,0)--(2,1,5);
\draw[thick] (2,2,0)--(2,2,5);
\draw[thick] (1,2,0)--(1,2,5);
\foreach \i in {1.6,1.7,...,5}
\draw[blue] (2,1,\i)--(2,2,\i);
\foreach \i in {0.1,0.2,...,1.2}
\draw[blue] (2,1,\i)--(2,2,\i);
\draw[blue] (2,1,0.1)--(2,2,0.1);
\draw[thick] (3,0,1)--(4,0,1)--(4,0,2)--(3,0,2)--(3,0,1);
\draw[thick] (3,5,1)--(4,5,1)--(4,5,2)--(3,5,2)--(3,5,1);
\draw[thick] (3,0,1)--(3,5,1);
\draw[thick] (4,0,2)--(4,5,2);
\draw[thick] (3,0,2)--(3,5,2);
\draw[thick] (4,0,1)--(4,5,1);
\draw[thick] (0,3,3)--(0,4,3)--(0,4,4)--(0,3,4)--(0,3,3);
\draw[thick] (5,3,3)--(5,4,3)--(5,4,4)--(5,3,4)--(5,3,3);
\draw[thick] (0,3,3)--(5,3,3);
\draw[thick] (0,4,3)--(5,4,3);
\draw[thick] (0,4,4)--(5,4,4);
\draw[thick] (0,3,4)--(5,3,4);
\foreach \i in {0.05,0.1,...,5}
\draw[magenta] (3,\i,2)--(4,\i,2);
\draw[thick,dashed,green,rounded corners] (0,0,0)--(2,0,0)--(2,0,2)--(5,0,2)--(5,5,2)--(5,5,5);
\draw[thick,dotted,rounded corners] (0,0,0)--(5,0,0)--(5,0,1.95)--(5,5.05,1.95)--(5,5,5);
\draw[thick,dash dot,rounded corners] (5,0,0)--(5,5,2);
\end{tikzpicture}
\caption{Execution paths. The dashed green path $Pc^1.Vc^1.Pa^3.Va^3.Pa^1.Va^1.T2.Pb^3.Vb^3$ is non serializable.}
\end{subfigure}
\hspace{2mm}
\begin{subfigure}{.45\textwidth}
\tdplotsetmaincoords{70}{110}
\begin{tikzpicture}[tdplot_main_coords, scale=0.5]
\draw[thick,->] (0,0,0) -- (5,0,0) node[anchor=north east]{$T1$};
\draw[thick,->] (0,0,0) -- (0,5,0) node[anchor=north west]{$T2$};
\draw[thick,->] (0,0,0) -- (0,0,5) node[anchor=south]{$T3$};
\draw[thick] (1,1,0)--(2,1,0)--(2,2,0)--(1,2,0)--(1,1,0);
\draw[thick] (1,1,5)--(2,1,5)--(2,2,5)--(1,2,5)--(1,1,5);
\draw[thick] (1,1,0)--(1,1,5);
\draw[thick] (2,1,0)--(2,1,5);
\draw[thick] (2,2,0)--(2,2,5);
\draw[thick] (1,2,0)--(1,2,5);
\foreach \i in {1.1,1.2,...,2}
\draw[blue] (2,1,\i)--(2,2,\i);
\foreach \i in {1.1,1.2,...,2}
\draw[blue] (1,2,\i)--(2,2,\i);
\foreach \i in {1.1,1.2,...,2}
\draw[blue] (1,\i,2)--(2,\i,2);
\draw[thick] (1,0,1)--(2,0,1)--(2,0,2)--(1,0,2)--(1,0,1);
\draw[thick] (1,5,1)--(2,5,1)--(2,5,2)--(1,5,2)--(1,5,1);
\draw[thick] (1,0,1)--(1,5,1);
\draw[thick] (2,0,1)--(2,5,1);
\draw[thick] (2,0,2)--(2,5,2);
\draw[thick] (1,0,2)--(1,5,2);
\draw[thick] (0,1,1)--(0,2,1)--(0,2,2)--(0,1,2)--(0,1,1);
\draw[thick] (5,1,1)--(5,2,1)--(5,2,2)--(5,1,2)--(5,1,1);
\draw[thick] (0,1,1)--(5,1,1);
\draw[thick] (0,2,1)--(5,2,1);

\draw[thick] (0,2,2)--(5,2,2);

\draw[thick] (0,1,2)--(5,1,2);

\end{tikzpicture}
\caption{The program $Pa.Va|Pa.Va|Pa.Va$, $a$ has capacity $1$. }
\end{subfigure}
\caption{a) Execution paths in the parallel program $T1=Pc.Vc.Pa.Va$, $T2=Pc.Vc.Pb.Vb$, $T3=Pa.Va.Pb.Vb$, where all resources have capacity $1$. The blocks are states in which the resource use exceeds the capacity, i.e., where one of the resources is acceses by two threads. Resource $a$ in red, $c$ in blue and $b$ white. The blocks do not intersect. The dash-dotted path represents an execution of $P^3a.V^3a$ concurrently with as $P^2c.V^2c.P^2b.V^2b$.\newline b) The program $Pa.Va|Pa.Va|Pa.Va$, $a$ has capacity $1$. The $3$ boxes are states where a pair of the threads have a lock on $a$. The blue intersection is where they all have a lock on $a$. Here, all executions are serializable. }\label{fig:serialize}
\end{figure}

\begin{example}[Local choice points]\label{ex:localchoice}
$p=T1|T2|T3$ where $T1=Pd.Pc.Vc.Vd$, $T2=Pc.Pd.Vd.Vc$, $T3=Pc.Pd.Vd.Vc$ and  both resources have capacity $2$. The geometric representation of this  is seen in Fig.~\ref{fig:localchoice} b). The local choice is at the state where $T1$ requests $c$, which is locked by $T2$ and $T3$, $T2$ and $T3$ both request $d$ which is locked by $T1$. $T1$ cannot proceed, but \emph{either} $T_2$ or $T_3$ can proceed - locking $d$. This choice is only local - all executions are in fact serializable in this case. If however, before getting to this choice, other resources had been aquired - as in Fig.~\ref{fig:localchoice} a), $T1=Pa.Pd.Pb.Vb.Pc.Vc.Vd.Va$, $T2=Pa.Pb.Pc.Va.Pd.Vd.Vb.Vc$, $T3=Pa.Pb.Vb.Va.Pc.Pd.Vd.Vc$ all 4 resources have capacity $2$. Those may prevent undoing the local choice.  The execution indicated by the dotted path -  \[P^1a.P^1d.P^1b.V^1bP^1c.P^2a.P^2b.P^2c.V^2a.P^2d.T3.V^1c.V^1d.V^1a.V^2d.V^2b.V^2c\] is not serializable - see \cite{Raussen2006} for details - this geometric/topological result implies that all attempts at  pairwise permuting actions to get to a serial execution will fail - run into a permutation which is not allowed - for quite intricate reasons. 
\begin{figure}
\centering
\begin{subfigure}{.45\textwidth}
\tdplotsetmaincoords{70}{110}
\begin{tikzpicture}[tdplot_main_coords,scale=0.4]
\draw[thick,->] (9,0,0) -- (-2,-0.7,-0.7) node[anchor=south east]{$T2$};
\draw[thick,->] (9,0,0) -- (9,9,0) node[anchor=north west]{$T1$};
\draw[thick,->] (9,0,0) -- (9,0,9) node[anchor=south]{$T3$};

\draw[thick] (5,1,1)--(8,1,1)--(8,1,4)--(5,1,4)--(5,1,1);
\draw[thick] (5,8,1)--(8,8,1)--(8,8,4)--(5,8,4)--(5,8,1);

\draw[thick] (5,1,1)--(5,8,1);
\draw[thick] (8,1,1)--(8,8,1);
\draw[thick] (8,1,4)--(8,8,4);

\draw[thick] (5,1,4)--(5,8,4);

\draw[thick] (3,2,6)--(3,7,6)--(3,7,7)--(3,2,7)--(3,2,6);
\draw[thick] (4,2,6)--(4,7,6)--(4,7,7)--(4,2,7)--(4,2,6);
\draw[thick] (3,2,6)--(4,2,6);
\draw[thick] (3,7,6)--(4,7,6);
\draw[thick] (3,7,7)--(4,7,7);
\draw[thick] (3,2,7)--(4,2,7);

\draw[thick] (2,3,2)--(2,4,2)--(2,4,3)--(2,3,3)--(2,3,2);
\draw[thick] (7,3,2)--(7,4,2)--(7,4,3)--(7,3,3)--(7,3,2);
\draw[thick] (2,3,2)--(7,3,2);
\draw[thick] (2,4,2)--(7,4,2);
\draw[thick] (2,4,3)--(7,4,3);
\draw[thick] (2,3,3)--(7,3,3);
\draw[thick] (1,5,5)--(1,6,5)--(1,6,8)--(1,5,8)--(1,5,5);
\draw[thick] (6,5,5)--(6,6,5)--(6,6,8)--(6,5,8)--(6,5,5);
\draw[thick] (1,5,5)--(6,5,5);
\draw[thick] (1,6,5)--(6,6,5);
\draw[thick] (1,6,8)--(6,6,8);
\draw[thick] (1,5,8)--(6,5,8);
\draw[thick,dashed] (4,5,6)--(3,5,6)--(3,5,7)--(4,5,7)--(4,5,6);
\draw[thick,dotted] (5,3,2)--(5,4,2)--(5,4,3)--(5,3,3)--(5,3,2);
\path (4,5,6) node[circle, fill, inner sep=2]{};
\node[draw] at (4,8,3) {$a$};
\node[draw] at (1,0.5,6) {$d$};
\node[draw] at (4,4.5,8) {$c$};
\node[draw] at (8,3,3) {$b$};

\draw[thick,dotted,rounded corners] (9,0,0)--(8.9,5,0)--(4,5,0)--(4,5,9)--(4,9,9)--(0,9,9);

\end{tikzpicture}
\caption{The dotted execution path is not serializable.}
\end{subfigure}
\hspace{2mm}
\begin{subfigure}{.45\textwidth}
\tdplotsetmaincoords{70}{110}
\begin{tikzpicture}[tdplot_main_coords,scale=0.4]
\draw[thick,->] (9,0,0) -- (-2,-0.7,-0.7) node[anchor=south east]{$T2$};
\draw[thick,->] (9,0,0) -- (9,9,0) node[anchor=north west]{$T1$};
\draw[thick,->] (9,0,0) -- (9,0,9) node[anchor=south]{$T3$};

\draw[thick] (3,2,6)--(3,7,6)--(3,7,7)--(3,2,7)--(3,2,6);
\draw[thick] (4,2,6)--(4,7,6)--(4,7,7)--(4,2,7)--(4,2,6);
\draw[thick] (3,2,6)--(4,2,6);
\draw[thick] (3,7,6)--(4,7,6);
\draw[thick] (3,7,7)--(4,7,7);
\draw[thick] (3,2,7)--(4,2,7);

\draw[thick] (1,5,5)--(1,6,5)--(1,6,8)--(1,5,8)--(1,5,5);
\draw[thick] (6,5,5)--(6,6,5)--(6,6,8)--(6,5,8)--(6,5,5);
\draw[thick] (1,5,5)--(6,5,5);
\draw[thick] (1,6,5)--(6,6,5);
\draw[thick] (1,6,8)--(6,6,8);
\draw[thick] (1,5,8)--(6,5,8);
\draw[thick,dashed] (4,5,6)--(3,5,6)--(3,5,7)--(4,5,7)--(4,5,6);

\path (4,5,6) node[circle, fill, inner sep=2]{};

\node[draw] at (1,0.5,6) {$d$};
\node[draw] at (4,4.5,8) {$c$};

\draw[thick,dotted,rounded corners] (9,0,0)--(8.9,5,0)--(4,5,0)--(4,5,9)--(4,9,9)--(0,9,9);

\end{tikzpicture}
\caption{There is a choice point, but all execution paths are equivalent.}
\end{subfigure}

\caption{Local choice points. \newline a)  $T1=Pa.Pd.Pb.Vb.Pc.Vc.Vd.Va$, $T2=Pa.Pb.Pc.Va.Pd.Vd.Vb.Vc$, $T3=Pa.Pb.Vb.Va.Pc.Pd.Vd.Vc$. All 4 resources have capacity $2$ and give rise to 4 "boxes" where the capacity of a resource is exceeded. See Ex.~\ref{ex:localchoice}
There is a local choice point at the black dot where either $T2$ or $T3$ (but not both) proceeds locking $d$. See Ex.~\ref{ex:localchoice}.
 In this case, the obstruction is realized in the sense that there are non-serializable executions. Geometrically, the dotted execution path "gets caught" and cannot be deformed to a serial path through non decreasing paths.  This is the example "two wedges" - see \cite{Raussen2006} for the details. \newline b)  If  $T1=Pd.Pc.Vc.Vd$, $T2=Pc.Pd.Vd.Vc$, $T3=Pc.Pd.Vd.Vc$ - the resources $a$ and $b$ are not requested - the local choice is still there, but all executions are serializable. }\label{fig:localchoice} 

\end{figure}
\end{example}

\emph{Cut-offs for local choice points:} There are no local choice points in $T^n$ for any $n$ if and only if there are no local choice points in $T^M$ where $M=\Sigma_{r\in\mathcal{R}}\kappa(r)+1$.  A generalization of this is: If there are no local choice points in any of the  sub programs on $M$-threads, $T_{i_1}|T_{i_2}|\cdots |T_{i_M}$ of $p=T1|T2|\cdots |Tn$,  then $p$ is serializable.

Notice that this is an obstruction result, it is not an equivalence. A program with local choice points may be serializable as in Fig.~\ref{fig:localchoice} b).  Hence it is not a cut-off result for serializability: If there are no local choice points in $T^M$, then all $T^n$ are serializable. 

If there is a local choice point in $T^n$, then there are \emph{potential} deadlocks (deadlocks which may not be reachable from the initial point and may even be forbidden states, i.e., holding too many locks on a resource) in $T^{n+1}$, Prop.~\ref{prop:ifchoicethendead}. Hence, a deadlock algorithm may be used to rule out the existence of local choice points in $T^M$.

Thm.~\ref{thm:serializabilitykapacity2} states that for $\kappa\equiv 1$, i.e., when the shared resources are mutexes, $T^n$ is serializable for all $n$ if and only if $T^2$ is serializable. The general case, serializability of $T1|T2|\ldots Tn$ is NP-complete \cite{Papa79}. In \cite{Yannakakis1979} the requirement is specified as pairwise serializability plus a condition on a graph.

When all resources have capacity at least $2$, $T$ is serializable if and only if \emph{all} executions are equivalent. This is Thm.~\ref{thm:serializableiffconnected}.

\section{Preliminaries}

 The PV-programs considered in the present paper have no loops and no branchings/choice. Hence, the definitions here are not the most general ones. For the more general definition see \cite{Fajstrup2016}.  The geometric models do cover loops, even nested loops, and choice. Loops are studied via delooping, see \cite{LFMSCS}, where deadlocks are part of the study. A deadlock in a looped program is equivalent to a deadlock in a delooping, so our cut off for deadlocks hold in the case of loops. A deadlock in a program with choice is a deadlock for one of the choices, so our deadlock result holds for choice as well. Similarly, if executions are equivalent, they will have the same number of repeats of each loop, and also have the same choices,  so our serializability results apply to that case as well . We have chosen not to state the specifics of this in order not to complicate notation. 
 First some notation:
\begin{itemize}
\item $\N$ are the natural numbers. $\N_0$ are natural numbers and $0$. $\R$ are the real numbers.
\item For $a<b$ real numbers, $[a,b]$ is the closed interval, $]a,b[$ is the open interval and half open intervals are $]a,b]$, $[a,b[$.
\item $I=[0,1]$ the unit interval.
\item For $n,m\in\N_0$ , $[n:m]$ is the set of  $k\in\N_0$ s.t $n\leq k\leq m$.
\item The coordinate functions of $\gamma: I\to \R^n$ are called $\gamma_i$.  $\gamma(t)=(\gamma_1(t),\gamma_2(t),\ldots,\gamma_n(t))$.

\end{itemize}
 \begin{definition}\label{def:valid} 
 Given a set $\mathcal{R}$ of resources, each with a positive capacity $\kappa:\mathcal{R}\to \N$. 
A \emph{PV-thread} is a finite sequence $T=w_1w_2\ldots w_l$ where $w_i\in\{Pr,Vr|r\in\mathcal{R}\}$. 

The resource use for a fixed  $r\in\mathcal{R}$ is defined for $0\leq i\leq l+1$:
\begin{itemize}
\item $\rho_r(T,0)=0$
\item For $i>0$: 
\begin{itemize}
\item $\rho_r(T,i)=\rho_r(T,i-1)+1$ if $w_i=Pr$.
\item $\rho_r(T,i)=\rho_r(T,i-1)-1$ if $w_i=Vr$.
\item $\rho_r(T,i)=\rho_r(T,i-1)$ otherwise.
\end{itemize}
\end{itemize}

$T$ is \emph{valid} if $0\leq\rho_r(T,i)\leq 1$ for all $i$ and $\rho_r(l+1)=0$ for all $r\in\mathcal{R}$.

A (valid) \emph{PV-program} is a parallel composition of (valid) PV-threads $p=T1|T2|\cdots|Tn$. If $T1=T2=\cdots =Tn=T$, this is denoted $p=T^n$.
 A state of $p$ is a tuple $\mathbf{x}=(x_1,\ldots,x_n)$, where $x_j\in [0,l_j+1]$.
The initial state, $\mathbf{0}$ is denoted $\perp$ and the final state where $x_j= [0,l_j+1]$  is $\top$.
The resource use at  $\mathbf{x}=(x_1,\ldots,x_n)$ is $\rho_r(p,\mathbf{x})=\Sigma_{j=1}^n\rho_r(T_j,x_j)$.
\end{definition}
\begin{remark} If it is clear what thread or program is considered, $\rho_r(-,j)$ will be denoted $\rho_r(j)$. 
\end{remark}
\begin{lemma}\label{lem:PVsequence} Let $T=w_1w_2\cdots w_l$ be a valid PV-thread. 
For each $r\in\mathcal{R}$ let $\mathcal{P}(r)=\{i\mid w_i=Pr\}$ and $\mathcal{V}(r)=\{i\mid w_i=Vr\}$. When these sets are non empty, let $i_1<i_2<\ldots <i_{k(r)}$ and $j_1<j_2<\ldots <j_{m(r)}$ be the ordered elements of $\mathcal{P}(r)$ respectively $\mathcal{V}(r)$
Then 
\begin{itemize}
\item For all $r$:  $\#\mathcal{P}(r)=\#\mathcal{V}(r)$, i.e., $k(r)=m(r)$
\item $i_s<j_s<i_{s+1}$ for $s=1,\ldots, k(r)-1$

\end{itemize}

\end{lemma}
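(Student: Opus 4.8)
The plan is to fix a single resource $r\in\mathcal{R}$ and study the integer-valued step function $i\mapsto\rho_r(T,i)$ on $\{0,1,\dots,l+1\}$. By validity this function is confined to the values $0$ and $1$; it equals $0$ at $i=0$; and it equals $0$ at $i=l$ as well, since there is no letter $w_{l+1}$, so $\rho_r(T,l+1)=\rho_r(T,l)$, and validity forces $\rho_r(T,l+1)=0$. Along the way, the function increases by exactly $1$ at each index in $\mathcal{P}(r)$, decreases by exactly $1$ at each index in $\mathcal{V}(r)$, and is constant at every other index. Thus the letters of $T$ involving $r$ are precisely the indices at which $\rho_r$ changes value.

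First I would prove, by induction on $s$, that these $r$-letters alternate and that the first of them is a $P_r$. If $\mathcal{P}(r)\cup\mathcal{V}(r)=\emptyset$ there is nothing to prove; otherwise let $e_1<e_2<\cdots$ be its ordered elements. Since $\rho_r$ is constant on $\{0,\dots,e_1-1\}$ it is $0$ there, so it can only increase at $e_1$, giving $w_{e_1}=P_r$ and $\rho_r(T,e_1)=1$. Inductively, if $\rho_r(T,e_s-1)=1$ then $w_{e_s}$ cannot be $P_r$ (that would give value $2$), so $w_{e_s}=V_r$ and $\rho_r(T,e_s)=0$; symmetrically, if $\rho_r(T,e_s-1)=0$ then $w_{e_s}=P_r$ and $\rho_r(T,e_s)=1$. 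Hence the $r$-letters read $P_r,V_r,P_r,V_r,\dots$ in order. After the last $r$-letter $\rho_r$ stays constant, equal to its value at $l$, which is $0$; so that last $r$-letter must be a $V_r$. Therefore $\#\mathcal{P}(r)=\#\mathcal{V}(r)$, i.e.\ $k(r)=m(r)$, and writing the $r$-indices in order as $i_1<j_1<i_2<j_2<\cdots<i_{k(r)}<j_{k(r)}$ gives exactly $i_s<j_s$ for all $s$ and $j_s<i_{s+1}$ for $s<k(r)$, as claimed.

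For the concluding assertions: $w_1$ is either $P_r$ or $V_r$ for some $r$; if it were $V_r$, then $\rho_r(T,1)=\rho_r(T,0)-1=-1<0$, contradicting validity, so $w_1=P_r$. Likewise $w_l$ is $P_r$ or $V_r$ for some $r$; if it were $P_r$, then since $\rho_r\geq 0$ always, $\rho_r(T,l)=\rho_r(T,l-1)+1\geq 1$, contradicting $\rho_r(T,l)=0$; so $w_l=V_{\tilde r}$ for some $\tilde r$. (These also follow from the alternation statement applied to the resource used by $w_1$, respectively $w_l$.)

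The argument is elementary, and I expect no real obstacle; the only place that needs a little care is the boundary bookkeeping — correctly using the absence of a letter $w_{l+1}$ to conclude $\rho_r(T,l)=0$, and hence that the trailing run of non-$r$-letters forces the last $r$-letter to be a $V_r$ — together with phrasing the alternation as a clean induction on $s$ rather than an informal ``it alternates''.
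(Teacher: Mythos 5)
Your proof is correct and rests on the same two facts the paper's own proof uses, namely $0\le\rho_r\le 1$ everywhere and $\rho_r(T,l+1)=0$: the paper phrases the argument as the counting identity $\rho_r(i)=\#\{s\in\mathcal{P}(r)\,|\,s\leq i\}-\#\{s\in\mathcal{V}(r)\,|\,s\leq i\}$ followed by two contradictions on the ordering ($i_k>j_k$ forces $\rho_r(j_k)\le -1$, $j_k>i_{k+1}$ forces $\rho_r(i_{k+1})\ge 2$), whereas you run an induction establishing strict alternation of the $r$-letters, but these are the same elementary argument in different clothing. Your boundary bookkeeping at $i=l+1$ and the separate treatment of $w_1$ and $w_l$ are both fine.
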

\begin{proof}
First, by Def.\ref{def:valid}, 
$\rho_r(i)=\#\{s\in\mathcal{P}(r)\mid s\leq i\}-\#\{s\in\mathcal{V}(r)\mid s\leq i\}$ and $\rho_r(l+1)=0$.
Hence, either both sets are empty or
$k(r)=m(r)$.

$i_k<j_k$: If $i_k>j_k$, then $\rho_r(j_k)\leq -1$ which is not allowed for a valid thread.

$j_k<i_{k+1}$: If $j_k>i_{k+1}$, then $\rho_r(i_{k+1})\geq 2$ and again this violates validity of $T$.

Suppose $w_1= Va$ for some $a\in\mathcal{R}$, then the $Pa,Va$-sequence does not satisfy the above results. Hence, $w_1=P\hat{r}$ for some $\hat{r}\in \mathcal{R}$. Similarly, $w_l$ has to be release of a resource. 
\end{proof}
Only valid PV-threads and valid PV-programs are considered in the following. 

\begin{definition}\label{def:PrVrsequence} With notation from Lem.~\ref{lem:PVsequence}   The sequence $i_1<j_1<i_2<\ldots<i_{k(r)}<j_{k(r)}$ is the $Pr,Vr$ sequence for $T$.
In particular, $w_1=P\hat{r}$ for some $\hat{r}\in \mathcal{R}$ and $w_l=V{\tilde{r}}$ for some $\tilde{r}\in\mathcal{R}$.
\end{definition}
In \cite{FGHMR2012}, see also \cite{Fajstrup2016} p.62, a geometric model of a more  general PV-program is provided. For our simpler case, it is as follows:
\begin{definition}\label{def:geommodel}
 The geometric model of a valid thread $T=w_1w_2\ldots w_l$ is the interval $[0,l+1]$.
 Let $i_1<j_1<i_2<\ldots<i_{k(r)}<j_{k(r)}$ be the $Pr,Vr$ sequence for $T$. The resource use $\rho_r:[0,l+1]\to \{0,1\}$ is given by $\rho_r(t)=1$ for $t\in ]i_k,j_k[$, $k=1,\ldots, k(r)$ and $\rho_r(t)=0$ otherwise.
 
 The geometric model of a PV-program $p=T1|T2|\cdots|Tn$ is the subset of the $n$-rectangle $$X=\{(x_1,\ldots,x_n)\in [0,l_1+1]\times \cdots \times [0,l_n+1]\mid \forall r\in\mathcal{R}:\Sigma_{i=1}^n\rho_r(x_i)\leq\kappa(r)\}$$
 
 The point $(0,0,\ldots,0)$ is denoted $\perp$ and the top point $(l_1+1,\ldots,l_n+1)$ is denoted $\top$. Notice the slight abuse of notation: The coordinate $l_i+1$ in the thread $Ti$ is also denoted $\top$. If clarification is needed, this is called $\top_i$ and similarly $\perp_i$ denotes $x_i=0$.
 
 Points in $X$ are  states of the program. A coordinate $x_i$ is a state of $Ti$.  An integer coordinate $x_i$ corresponds to either access request $Pr$, release, $Vr$, bottom $\perp_i$ or top $\top_i$ of $Ti$. 
\end{definition}
\begin{remark} The definition of the resource use function for a thread has the following interpretation: The thread $T$ holds a lock on the  resource in the open interval $]i_k,j_k[$, i.e., it is requested, but not yet granted at $x=i_k$. It is released and not held anymore at $x=j_k$. This corresponds to considering $T$ as a linear graph, where $Pa$ is a state which is followed by the action - an edge -  of actually locking $a$ and $Va$ is the state right after an edge releasing $a$. In \cite{Fajstrup2016} p. 62, the locking is shifted by $\frac{1}{2}$ in the sense that $Pa$ and $Va$ both correspond to a unit interval, an edge, so that if $Pa=[2,3]$ and $Va=[3,4]$, the resource is locked in the interval $]2.5, 3.5[$. Our choice here ensures that e.g. a deadlock point has integer coordinates. 
\end{remark}
The state space $X$ consists of points in $[0,l_1+1]\times \cdots \times [0,l_n+1]$ where no resource is locked above its capacity.
\begin{example}\sloppy In Fig.~\ref{fig:serialize}b)  the program $Pa.Va|Pa.Va|Pa.Va$,  $\kappa(a)=1$, gives rise to three forbidden 3-rectangles. $]1,2[\times ]1,2[\times [0,3]$,   $]1,2[\times [0,3]\times ]1,2[$, $[0,3]\times ]1,2[\times ]1,2[$. These are the states where at least $2$ threads lock the resource. If $\kappa(a)=2$  the forbidden region is $]1,2[\times ]1,2[\times]1,2[$, the states where all $3$ threads hold a lock on $a$. If $a$ had capacity $3$, all states would be allowed.
\end{example}

The following definition and lemma describe this  complement of $X$, the \emph{forbidden states} as a union of $n$-rectangles. 
\begin{definition}\label{def:forbidden}
The forbidden states are states where some resource is locked above its capacity. The \emph{forbidden area} is the set of all such states:  $F_p=\{(x_1,\ldots,x_n)\in [0,l_1+1]\times \cdots \times [0,l_n+1]\mid\Sigma_{i=1}^n\rho_r(x_i)>\kappa(r)\;\mbox{for some}\; r\in\mathcal{R}\}$. 

The geometric model is  $X=[0,l_1+1]\times \cdots \times [0,l_n+1]\setminus F_p$
\end{definition}

\begin{lemma}\label{lem:geommodel} With notation as in Def.~\ref{def:geommodel} and Def.~\ref{def:forbidden} the forbidden area is the union $F_p=\bigcup_{m=1}^N R^m$ of all $n$-rectangles  $R^m=I_1\times\cdots\times I_n$ for which there is a resouce $r$ s.t.
\begin{itemize}
\item There is a subset $\mathcal{I}=\{ k_1,\ldots , k_{\kappa(r)+1} \}\subseteq [1: n]$ s.t. $I_{k_i}= ]i_s,j_s[$, where $i_s,j_s$ are  in the $Pr,Vr$-sequence for $Tk_i$.
\item For $j\notin \mathcal{I}$, $I_j=[0,l_j+1]$.
\end{itemize}

\end{lemma}
\begin{proof}
$F_p\subseteq\bigcup_{m=1}^N R^m$: 
Let $(x_1,\ldots, x_n)\in F_p$ and let $r\in\mathcal{R}$ be a resource s.t. $\Sigma_{i=1}^n\rho_r(x_i)>\kappa(r)$. 

As $\rho_r(x_i)\in \{0,1\}$, $\Sigma_{i=1}^n\rho_r(x_i)>\kappa(r)$ if and only if there is a set $\mathcal{I}=\{ k_1,\ldots , k_{\kappa(r)+1} \}\subset[1: n]$, s.t.  $\rho_r(x_{k_i})=1$. 

This implies that $x_{k_i}\in ]i_s,j_s[$, where $i_s,j_s$ are  in the $Pr,Vr$-sequence for $Tk_i$. For $j\notin \mathcal{I}$ there is no restriction and hence $(x_1,\ldots,x_n)\in I_1\times\cdots\times I_n$ with $I_{k_i}=]i_s,j_s[$ for $i=1,\ldots,\kappa(r)+1$ and $I_j=[0,l_j+1]$ otherwise. 

$F_p \supseteq\bigcup_{m=1}^N R^m$: Let $R$ be an $n$-rectangle as above with corresponding resource $r$. If $\mathbf{y}\in R$, then  $\rho_r(\mathbf{y})>\kappa(r)$ and hence $\mathbf{y}\in F_p$.

\end{proof}
The following is a quite general definition which we give to define execution paths and to extend partial executions
\begin{definition}\label{def:dipath}
Let $X$ be a subset of $\mathbb{R}^n$. A continuous function  $\gamma:[0,1]\to X$, $\gamma(s)=(\gamma_1(s),\gamma_2(s),\ldots,\gamma_n(s))$ such that $t\leq s \Rightarrow \gamma_i(t)\leq \gamma_i(s)$ for $i\in [1:n]$ is called a \emph{dipath}.

Let $\gamma_1$ and $\gamma_2$ be dipaths in $X$ s.t. $\gamma_1(1)=\gamma_2(0)$ then the \emph{concatenation}  is the dipath $\gamma_1\star\gamma_2: [0,1]\to X$ with $\gamma_1\star\gamma_2(t)= \gamma_1(2t)$ for $t\in [0,\frac{1}{2}]$ and $\gamma_1\star\gamma_2(t)= \gamma_2(2t-1)$ for $t\in [\frac{1}{2},1]$ 
\end{definition}

\begin{definition}\label{def:execution}

A \emph{partial execution} of a PV-program with geometric model $X$  is a dipath  $\gamma:[0,1]\to X$,  such that $\gamma(0)=\perp$.

An execution is a partial execution such that $\gamma(1)=\top$

A state $\mathbf{y}\in X$ is \emph{reachable} if there is a partial execution $\gamma$ with $\gamma(1)=\mathbf{y}$. 

 Executions  $\gamma$ and $\mu$ are equivalent  if there is  a continuous map $H:I\times I\to X$ such that: For all $s$ $H(0,s)=\perp$, $H(1,s)=\top$,  for fixed $s_0$, $H(t,s_0)$ is an execution path, i.e., $t\leq t'\Rightarrow H(t,s_0)\leq H(t',s_0)$.  Moreover, $H(t,0)=\gamma(t)$, $H(t,1)=\mu(t)$. Such an $H$ is a \emph{dihomotopy} and the execution paths $\gamma$ and $\mu$ are \emph{dihomotopic}. 

\end{definition}
For examples of executions, equivalent executions and reasons for this definition, please see the introduction.

\begin{figure}
\centering
\begin{tikzpicture}[scale=0.6]
 
  \node[label=below:$\perp$]  (x1) at (6,0)  {$\bullet$};
  \node[label=above:$\top$]  (x0) at (9,4)  {$\bullet$};  
  \draw[green] (x1.center) to [out=5,in=-90]++(2.8,1.8) to[out=90,in=-95](x0.center);
  \draw (x1.center) to [out=10,in=-110]++(2.6,2) to[out=70,in=-103](x0.center); 
  \draw (x1.center) to [out=15,in=-105](x0.center);
  \draw (x1.center) to [out=30,in=-150](x0.center);
  \draw (x1.center) to [out=45,in=-170](x0.center); 
  \draw (x1.center) to [out=50,in=-105]++(1.2,3)to [out=75,in=-172](x0.center); 
  \draw (x1.center) to [out=55,in=-100]++(1.0,3) to[out=80,in=-175](x0.center); 
  \draw[red] (x1.center) to [out=60,in=-90]++(0.8,3) to[out=90,in=-180] (x0.center);
  \begin{scope}[every node/.style={draw, anchor=text, rectangle split,
    rectangle split parts=7,minimum width=2cm}]
    \node (R) at (2,4){ \nodepart{two} \nodepart{three}
    \nodepart{four}$I\times I$\nodepart{five}\nodepart{six}\nodepart{seven}};
  \end{scope}
\draw[green](R.south east)--(R.south west);
\draw[red](R.north east)--(R.north west);
  \draw[decorate,decoration={brace,mirror,raise=6pt,amplitude=10pt}, thick]
    (R.north west)--(R.south west) ;
  \draw[decorate,decoration={brace,raise=6pt,amplitude=10pt}, thick]
    (R.north east)--(R.south east); 
  \draw[dotted,->] ($(R.west)+(-20pt,0)$) to[out=180,in=-240] ++(0,-2)
    to [out=-60,in=-120]node[below,midway]{$H(0,s)$}(x1) ; 
  \draw[->] ($(R.north)+(0,10pt)$) to [out=60,in=120]
    node[above,midway]{$\mu$} ++(4.5,-1) ; 
  \draw[dotted, ->] ($(R.east)+(20pt,0)$)  to [out=0,in=140]
    node[left,midway]{$H(1,s)$}(x0) ; 
  \draw[->] ($(R.south)+(0,-20pt)$)  to [out=-85,in=-30]
    node[below,midway]{$\gamma$}++(7,0) ;    
\end{tikzpicture}
\caption{A directed homotopy - a continuous family of execution paths.}\label{fig:homotopy}
\end{figure}

\section{Deadlock}
Deadlocks in a $PV$-program are characterized in terms of resource use and capacities and the equivalent geometric definition is given. The main result is the cut-off theorem for deadlocks in $T^n$, Thm.~\ref{thm:deadgeneral} and Cor.~\ref{cor:deadsymmetric}. By \cite{LFMSCS}, if there are loops in $T$, a state is a deadlock in $T^n$ if and only if the corresponding state is a deadlock in the non-looped program, where all loops in $T^n$ are delooped once in the sense that all $w_i( w_{i+1}...w_k)^*w_{k+1}$ are replaced by $w_i. w_{i+1}...w_k.w_{k+1}$ . Similarly for non deterministic choice: A deadlock in such a program is a deadlock for at least one of the choices. So deadlocks can be found one choice at a time.  Hence, the deadlock cut-off results hold for more general $PV$-programs.

\begin{definition}\label{def:deadlock}

Let $X$ be the geometric model of a $PV$-program with $n$ threads. 
The point $\mathbf{x}=(x_1,\ldots,x_n)\in X$ is a geometric deadlock if all of the following hold
\begin{enumerate}
\item $\mathbf{x}$ is reachable and not $\top$.
\item If $\gamma:I\to X$ is a partial execution with $\gamma(t_0)=\mathbf{x}$ then $\gamma(t)=\mathbf{x}$ for all $t\in [t_0,1]$. Execution paths cannot proceed from $\mathbf{x}$.
\end{enumerate}
Equivalently - as seen in Lem.~\ref{lem:deadlock}
A state $\mathbf{x}=(x_1,\ldots,x_n)\in X$ is a deadlock if the following three conditions hold
\begin{enumerate}
\item $\mathbf{x}$ is reachable and not $\top$.
\item All $x_i$ are access requests $P_{r(i)}$ or $x_i=\top$.
\item For all  $i\in [1:n]$, s.t. $x_i\neq \top$, $\rho_{r(i)}(x_1,\ldots,x_n)= \kappa(r(i))$.
\end{enumerate}

\end{definition}
The geometric definition is as in \cite{LFEGMRAlgebraic} Def.~4.44. The following lemma ensures that our definition of the forbidden area and executions does not allow a complicated execution path to escape from $\mathbf{x}$ by e.g. execution of several threads concurrently, as long as none of the individual threads can proceed. 

\begin{lemma}\label{lem:deadlock}
The two definitions of a deadlock are equivalent. In particular all coordinates of a geometric deadlock $\mathbf{x}$ are natural numbers.
\end{lemma}
\begin{proof} 
If, contrary to 3.1.1, a partial  execution path can proceed from $\mathbf{x}$, then at least one of the threads can proceed. This is because the forbidden region is a union of $n$-rectangles. With general geometric shapes, it would not hold. We make this precise in \cite{LFEGMRAlgebraic} Thm. 5.11 where in the proof of i) implies ii) we see that a state  $\mathbf{x}$ is a geometric deadlock if and only if it is reachable, $\mathbf{x}\neq \top$ and for all $i$, either $x_i=\top$ or  $(x_1,x_2,\ldots ,x_{i-1},x_i+t,x_{i+1},\ldots,x_n)\in F_p$ for  all $t\in ]0,\frac{1}{2}[$. I.e., no execution path can proceed from $\mathbf{x}$ if and only if no individual thread can proceed.

Suppose $\mathbf{x}$ is a deadlock.
If $x_j=\top$, the j'th process cannot proceed. If $x_i=P_{r(i)}$ and $\rho_{r(i)}(\mathbf{x})=\kappa(r(i))$, then $\rho_{r(i)}(x_1,x_2,\ldots ,x_{i-1},x_i+t,x_{i+1},\ldots,x_n)=\kappa(r(i))+1$, for $0<t<\frac{1}{2}$ and therefore $(x_1,x_2,\ldots ,x_{i-1},x_i+t,x_{i+1},\ldots,x_n)\in F_p$ So $\mathbf{x}$ is a geometric deadlock.

Now suppose $\mathbf{x}$ is a geometric deadlock. If $x_i=Vr$, then the i'th process can proceed and $\mathbf{x}$ is not a geometric deadlock, hence $x_i=P_{r(i)}$ or $x_i=\top$. If $x_i=P_{r(i)}$ and $\rho_{r(i)}<\kappa(r(i))$, then the i'th process may proceed and $\mathbf{x}$ is not a deadlock. Hence, $\rho_{r(i)}(\mathbf{x})\geq \kappa(r(i))$, and $\rho_{r(i)}(\mathbf{x})\leq \kappa(r(i))$ as $\mathbf{x}\in X$, so $\rho_{r(i)}(\mathbf{x})= \kappa(r(i))$.

\end{proof}

\begin{lemma}\label{lem:topdeadlock}
\sloppy Let $T1,T2,\ldots,Tn$ be PV-threads. If $(x_1,\ldots,x_m)$ is a deadlock in $Ti_1|\ldots|Ti_m$ where $i_j\in\{1,\ldots,n\}$  then $(x_1,\ldots,x_m,\top,\ldots,\top)$ is a deadlock in $Ti_1|\ldots|Ti_m|T{j_1}|\ldots|T{j_k}$ for all choices of $j_i\in\{1,\ldots,n\}$
\end{lemma}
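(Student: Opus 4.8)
The plan is to verify the three conditions of the deadlock characterisation in Lemma~\ref{lem:deadlock} for the point $\mathbf{x}' = (x_1,\ldots,x_m,\top,\ldots,\top)$ in the larger program $p' = Ti_1|\ldots|Ti_m|Tj_1|\ldots|Tj_k$, using that $\mathbf{x} = (x_1,\ldots,x_m)$ satisfies them in $p = Ti_1|\ldots|Ti_m$.

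First I would dispatch the two easy conditions. By Lemma~\ref{lem:deadlock} applied to $\mathbf{x}$ in $p$, each $x_i$ with $i\le m$ is either an access request $P_{r(i)}$ or $\top$; the appended coordinates are all $\top$ by construction; so every coordinate of $\mathbf{x}'$ is an access request or $\top$, and since at least one $x_j\ne\top$ (as $\mathbf{x}$ is a deadlock) we have $\mathbf{x}'\ne\top$. For the capacity condition, note that a coordinate equal to $\top$ contributes $0$ to every resource use by Def.~\ref{def:geommodel}, so $\rho_r(p',\mathbf{x}') = \rho_r(p,\mathbf{x})$ for every $r\in\mathcal{R}$; hence for each $i\le m$ with $x_i\ne\top$ we get $\rho_{r(i)}(p',\mathbf{x}') = \rho_{r(i)}(p,\mathbf{x}) = \kappa(r(i))$, which is the third condition (the coordinates with index $>m$ are all $\top$ and impose no requirement).

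The only slightly delicate point is reachability of $\mathbf{x}'$ in $p'$, and I would handle it by concatenating two dipaths. Stage one: from $\perp$, run the threads $Tj_1,\ldots,Tj_k$ one at a time from $\perp$ to $\top$ while freezing $Ti_1,\ldots,Ti_m$ at $\perp$; along this path at most one appended thread is strictly between $\perp$ and $\top$, so each resource use is at most $1\le\kappa(r)$ (capacities are positive by Def.~\ref{def:valid}), the path stays in the geometric model of $p'$, and it ends at $(\perp,\ldots,\perp,\top,\ldots,\top)$. Stage two: take a dipath in $p$ witnessing reachability of $\mathbf{x}$ from $\perp$ and run it on the first $m$ coordinates of $p'$ while holding the appended threads at $\top$; since those $\top$-coordinates contribute $0$ to every $\rho_r$, each state has the same resource-use values as the corresponding state of the original dipath, so this path too stays in the model of $p'$. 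The concatenation is a dipath from $\perp$ to $\mathbf{x}'$, so $\mathbf{x}'$ is reachable; together with the conditions above, Lemma~\ref{lem:deadlock} then gives that $\mathbf{x}'$ is a deadlock.

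I expect the main (and really the only) obstacle to be the bookkeeping in the reachability step — in particular the observations that a single valid thread never meets a forbidden region because $\rho_r\le 1\le\kappa(r)$, and that parking threads at $\top$ leaves the resource-use function unchanged. Everything else is a direct transcription of Lemma~\ref{lem:deadlock}.
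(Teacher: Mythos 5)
Your proposal is correct and follows essentially the same route as the paper: check that the $\top$-coordinates leave the resource-use function and the set of requests unchanged, then reach $(x_1,\ldots,x_m,\top,\ldots,\top)$ by concatenating a serial execution of the appended threads with the lifted witness dipath for $(x_1,\ldots,x_m)$. Your added justification that the serial first stage is allowed because a single active valid thread has $\rho_r\le 1\le\kappa(r)$ is a detail the paper leaves implicit, but otherwise the arguments coincide.
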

\begin{proof}
 \sloppy The resource use $\rho_s(x_1,\ldots,x_m)=\rho_s(x_1,\ldots,x_m,\top,\ldots,\top)$ for all resources $s$, as no resources are held at $\top$. The requests for resources are also the same, namely $x_i=Pr(i)$ or $x_i=\top$. Hence, $ (x_1,\ldots,x_m,\top,\ldots,\top)$ satisfies point 2) and 3) of Def.~\ref{def:deadlock}. This point is reachable by a concatenation of
\begin{itemize}
\item A sequential execution path $Tj_1.Tj_2.\ldots .Tj_k$ to $(0,\ldots,0,\top,\ldots,\top)$ (the first $m$ coordinates stay $0$.)
\item  Followed by a dipath from $(0,\ldots,0,\top,\ldots,\top)$ to $(x_1,\ldots,x_m,\top,\ldots \top)$ given as follows: $(x_1,\ldots,x_m)$ is reachable by a dipath $\gamma$ in  $Ti_1|\ldots|Ti_m$ from $\mathbf{0}$. Let $\mu=(\gamma_1(t),\ldots,\gamma_m(t),\top,\ldots,\top)$ in $\R^n$, $\mu(0)=(0,\ldots,0,\top,\ldots,\top)$ and $\mu(1)=(x_1,\ldots,x_m,\top,\ldots \top)$. $\mu(t)$ is in $X$, as no resources are locked at $\top$ and hence $\rho_r(\mu(t))=\rho_r(\gamma(t))$ for all $r$.
\end{itemize}
\end{proof}

\begin{example}\label{ex:deadcombo} Let $T1=Pa.Pb.Vb.Va$ and $T2=Pb.Pa.Va.Vb$ and let both resources have capacity 1. Then $T1|T2$ has a deadlock. The thread $T=T1.T2$ in parallel with itself $T^2$ has two deadlocks. See Fig.~\ref{fig:twodeadlocks}. This generalizes, see Prop.~\ref{prop:deadcombo}.  If $T1|T2|\cdots |Tn$ has a deadlock, then $(T1.T2.\ldots.Tn)^n$ has at least $n(n-1)$ deadlocks.
\end{example}

\begin{figure}
\begin{center}
\begin{tikzpicture}[scale=0.6]
\draw [->] (0,0)--(5,0);
\draw [->] (0,0)--(0,5);

\filldraw[fill=white!30!blue,opacity=.8,draw=black] (1,2) rectangle (4,3);
\filldraw[fill=white!30!blue,opacity=.8,draw=black] (2,1) rectangle (3,4);

\filldraw[fill=red] (2,2) circle (.08cm);
\draw (1,0) node[anchor=north]{$Pa$};
\draw (2,0) node[anchor=north]{$Pb$};
\draw (3,0) node[anchor=north]{$Vb$};
\draw (4,0) node[anchor=north]{$Va$};

\draw (0,1) node[anchor=east]{$Pb$};
\draw (0,2) node[anchor=east]{$Pa$};
\draw (0,3) node[anchor=east]{$Va$};
\draw (0,4) node[anchor=east]{$Vb$};

\draw (5,0) node[anchor=west]{$T1$};
\draw (0,5) node[anchor=south]{$T2$};
\foreach \x in {1,...,4}
\draw (\x cm,1pt)--(\x cm,-1pt);
\foreach \y in {1,...,4}
\draw (-1pt,\y cm)--(1pt,\y cm);

\end{tikzpicture}
\begin{tikzpicture}[scale=0.6]
\draw [->] (0,0)--(9,0);
\draw [->] (0,0)--(0,9);

\filldraw[fill=white!20!blue,opacity=.8] (2,2) rectangle (3,3);
\filldraw[fill=white!30!blue,opacity=.8,opacity=.8,draw=black] (1,1) rectangle (4,4);
\filldraw[fill=white!20!blue,opacity=.8] (5,5) rectangle (8,8);
\filldraw[fill=white!30!blue,opacity=.8,opacity=.8,draw=black] (6,6) rectangle (7,7);
\filldraw[fill=white!30!blue,opacity=.8,draw=black] (6,1) rectangle (7,4);
\filldraw[fill=white!30!blue,opacity=.8,draw=black] (1,6) rectangle (4,7);
\filldraw[fill=white!30!blue,opacity=.8,draw=black] (2,5) rectangle (3,8);
\filldraw[fill=white!40!blue,opacity=.8,draw=black] (5,2) rectangle (8,3);
\filldraw[fill=red] (6,2) circle (.08cm);
\filldraw[fill=red] (2,6) circle (.08cm);
\draw (1,0) node[anchor=north]{$Pa$};
\draw (2,0) node[anchor=north]{$Pb$};
\draw (3,0) node[anchor=north]{$Vb$};
\draw (4,0) node[anchor=north]{$Va$};
\draw (5,0) node[anchor=north]{$Pb$};
\draw (6,0) node[anchor=north]{$Pa$};
\draw (7,0) node[anchor=north]{$Vb$};
\draw (8,0) node[anchor=north]{$Va$};

\draw (0,1) node[anchor=east]{$Pa$};
\draw (0,2) node[anchor=east]{$Pb$};
\draw (0,3) node[anchor=east]{$Vb$};
\draw (0,4) node[anchor=east]{$Va$};
\draw (0,5) node[anchor=east]{$Pb$};
\draw (0,6) node[anchor=east]{$Pa$};
\draw (0,7) node[anchor=east]{$Va$};
\draw (0,8) node[anchor=east]{$Vb$};

\draw (9,0) node[anchor=west]{$T$};
\draw (0,9) node[anchor=south]{$T$};
\foreach \x in {1,...,8}
\draw (\x cm,1pt)--(\x cm,-1pt);
\foreach \y in {1,...,8}
\draw (-1pt,\y cm)--(1pt,\y cm);

\end{tikzpicture}
\end{center}
\caption{One deadlock in $T1|T2$. Two deadlocks in $(T1.T2)^2$.}\label{fig:twodeadlocks}

\end{figure}

\begin{example} Let $T$ be  $PV$-thread such that every resource is accessed at most once, then there are no deadlocks in $T^n$ and no need for a cut-off result. Suppose $\mathbf{x}=(x_1,\ldots,x_n)$ is a deadlock. Then, if $x_i\neq\top$, $x_i=P{r(i)}$. Since $\mathbf{x}$ is a deadlock, there are $\kappa(r(i))$ locks on $r(i)$. The threads $j$ which hold a lock on $r(i)$ satisfy $x_i<x_j<\top$, since $r(i)$ is only locked once - at $x_i$. 

Let $x_k=\max(\{x_1,\ldots,x_n\}\setminus\top)$. Then $x_k=P{r(k)}$ and  $r(k)$ is not locked, since $x_k$ is maximal. A contradiction.
\end{example}

The symmetric case $T^n$ is as complicated as different threads in parallel in the following sense:

\begin{proposition}\label{prop:deadcombo} Let $p=T1|T2|\ldots|Tn$ be a valid PV-program. Suppose there is a  deadlock in $p$. Let $T=T1.T2.\ldots.Tn$, then there are at least $n(n-1)$  deadlocks in $T^n$. If all threads are nontrivial, there are at least $n!$ deadlocks in $T^n$.

\end{proposition}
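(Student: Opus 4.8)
\textit{Plan.} The idea is to take the given deadlock $\mathbf{x}=(x_1,\dots,x_n)$ of $p$ and reproduce it inside $T^n$ in many different ways, exploiting that $T=T_1.T_2.\cdots.T_n$ is a concatenation of $n$ consecutive \emph{blocks}, the $i$-th being a copy of $T_i$. I would set $b_i=l_1+\dots+l_{i-1}$ (so $b_1=0$), the offset at which block $i$ starts; the key elementary observation is that if one coordinate of a state of $T^n$ sits at position $b_i+\xi$ with $0\le\xi\le l_i$, then within block $i$ it is at the local state $\xi$ of $T_i$, and — since each $T_k$ is valid, so that in particular $T$ is valid, the completed blocks $k<i$ hold nothing and the untouched blocks $k>i$ hold nothing — this coordinate contributes exactly $\rho_r(T_i,\xi)$ to $\rho_r$ for every resource $r$. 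Write $S=\{i:x_i=\top\}$ and $A=[1:n]\setminus S$; by Lemma~\ref{lem:deadlock}, $x_i=P_{r(i)}$ is a letter position ($1\le x_i\le l_i$) and $\rho_{r(i)}(\mathbf{x})=\kappa(r(i))$ for every $i\in A$.

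The relocated deadlocks would be indexed by pairs $(C,\beta)$, where $C$ is a choice of $|A|$ of the $n$ copies of $T$ and $\beta\colon C\to A$ is a bijection: let $\mathbf{y}^{C,\beta}$ be the state of $T^n$ in which copy $j\in C$ sits at position $b_{\beta(j)}+x_{\beta(j)}$ and every copy $j\notin C$ sits at $\top$. I would first check, via Lemma~\ref{lem:deadlock}, that $\mathbf{y}^{C,\beta}$ is a deadlock: each coordinate is either $\top$ or the access request $P_{r(\beta(j))}$; copies at $\top$ contribute nothing and each active copy $j$ contributes $\rho_r(T_{\beta(j)},x_{\beta(j)})$ to resource $r$, so the observation above gives $\rho_r(\mathbf{y}^{C,\beta})=\sum_{i\in A}\rho_r(T_i,x_i)=\rho_r(\mathbf{x})$ for all $r$, hence $\rho_{r(\beta(j))}(\mathbf{y}^{C,\beta})=\kappa(r(\beta(j)))$ for every $j\in C$; and not all coordinates equal $\top$ because $A\neq\emptyset$. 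For reachability I would argue as in Lemma~\ref{lem:topdeadlock}: first move the copies one at a time — each while the others rest at $\perp$ or $\top$, hence hold nothing, so no capacity is exceeded — bringing every copy $j\in C$ to the left end $b_{\beta(j)}$ of its block and every copy $j\notin C$ all the way to $\top$; then, using a dipath $\gamma$ of $p$ from $\perp$ to $\mathbf{x}$, let copy $j\in C$ follow $u\mapsto b_{\beta(j)}+\gamma_{\beta(j)}(u)$ (which stays inside block $\beta(j)$ since $\gamma_{\beta(j)}(u)\le x_{\beta(j)}\le l_{\beta(j)}$) and leave the rest at $\top$; along this path the total use of resource $r$ is $\sum_{i\in A}\rho_r(T_i,\gamma_i(u))\le\rho_r(\gamma(u))\le\kappa(r)$, so it stays in the geometric model and ends at $\mathbf{y}^{C,\beta}$ when $u=1$.

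It then remains to count. Distinct pairs $(C,\beta)$ yield distinct states: from $\mathbf{y}^{C,\beta}$ the set of coordinates equal to $\top$ is exactly $[1:n]\setminus C$, and for $j\in C$ the value $b_{\beta(j)}+x_{\beta(j)}$ lies in the interval $[b_{\beta(j)}+1,\,b_{\beta(j)+1}]$ spanned by block $\beta(j)$, and these intervals are pairwise disjoint, so $\beta$ is recovered. Hence $T^n$ has at least $\binom{n}{|A|}\,|A|!=n!/(n-|A|)!$ deadlocks. Finally $|A|\ge 2$: if $A=\{i\}$ then $\rho_{r(i)}(\mathbf{x})$ is the single contribution $\rho_{r(i)}(T_i,x_i)$, which is $0$ since at the $P_{r(i)}$ state the resource is requested but not yet granted — contradicting $\rho_{r(i)}(\mathbf{x})=\kappa(r(i))\ge 1$. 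So $n-|A|\le n-2$ and $n!/(n-|A|)!\ge n!/(n-2)!=n(n-1)$.

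\textit{Main obstacle.} The difficulty is organisational rather than conceptual: three index sets are in play — the thread indices of the deadlock in $p$, the block indices of $T$, and the copy indices of $T^n$ — and they are tied together through $\beta$; one must also keep the local resource-use function of $T_i$ and the global one of $T$ correctly aligned on block $i$, which is exactly where validity of the $T_k$ is used. The deadlock check and the reachability argument are near-verbatim repetitions of Lemmas~\ref{lem:deadlock} and~\ref{lem:topdeadlock} (the reachability being the longest but most routine step), and the only genuinely new ingredient is the short observation $|A|\ge 2$, which is precisely what upgrades the count from $n$ to $n(n-1)$.
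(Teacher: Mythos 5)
Your proof is correct, and it follows the same basic idea as the paper's: embed the deadlock of $p$ into the blocks of the concatenated thread $T$ and then permute which copy of $T$ plays which block. The difference is in how the $\top$-coordinates of the original deadlock are treated, and here your version is actually the more careful one. The paper shifts \emph{every} coordinate $x_i$ into block $i$, forming $\tilde{x}=(x_1,\,l(T1)+x_2,\ldots)$, and claims all $n!$ permutations are deadlocks; but if some $x_i=\top_i$, the shifted coordinate lands at the boundary between block $i$ and block $i+1$, which is not $\top$ of $T$ --- the next action there is the first $P$ of block $i+1$, which need not be blocked, so that copy may be able to proceed. You avoid this by sending the copies outside your chosen set $C$ to the genuine top of $T$ and only relocating the \emph{active} coordinates (indexed by $A$) into blocks via the bijection $\beta$; the resulting count $n!/(n-|A|)!$ together with your observation that $|A|\ge 2$ (a single valid thread cannot block itself, since at a $P_{r}$ state it does not yet hold $r$) lands exactly on the claimed bound $n(n-1)$, which the paper's $n!$ count only matches in the special case where no coordinate of the deadlock in $p$ is at $\top$. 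The deadlock verification via Lemma~\ref{lem:deadlock}, the block-offset bookkeeping using validity of the $T_k$, the two-stage reachability path, and the distinctness argument are all sound.
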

\begin{proof} \sloppy Let $\mathbf{x}=(x_1,\ldots,x_n)$ be  a  deadlock in $p$. Let $l_i$ be as in Def.~\ref{def:geommodel} . The point $\tilde{\mathbf{x}}=(x_1, l_1+x_2, l_1+l_2+x_3,\ldots,\Sigma_{i=1}^{n-1} l_i+x_n)$ is a deadlock in $T^n$:  The action and locked resources at  $\Sigma_{i=1}^j +x_{j+1}$ in $T$ is the same as at $x_{j+1}$ in $T(j+1)$. $\tilde{\mathbf{x}}$ is reachable by a concatenation $\eta\star\mu$ of a sequential dipath $\eta$ to $ \mathbf{y}=(0, l_1, l_1+l_2,\ldots, \Sigma_{i=1}^{n-1} l_i)$ and $\mu(t)=\mathbf{y}+\gamma(t)$ - addition of vectors, where $\gamma$ is a dipath from $\perp$ to $\mathbf{x}$ in $p$. Hence, $\tilde{\mathbf{x}}$ is a deadlock. By symmetry, all permutations of the coordinates in $\tilde{\mathbf{x}}$ give deadlocks. As  $x_i\neq 0$ for all $i$ and as there is a deadlock in $p$, at least $2$ of the threads are non trivial, there are at least $2$ different coordinates in $\tilde{\mathbf{x}}$ and hence at least  $n(n-1)$ such deadlocks. If all threads are non trivial, $l_i> 0$, all coordinates of $\tilde{x}$ are different - all $x_i\neq 0$ - and the permutations give rise to $n!$ different deadlocks.

\end{proof}
\begin{remark} Proposition \ref{prop:deadcombo} shows how construct a program $T^n$ from a $PV$-program $p=T1|T2|\ldots|Tn$ in such a way that the states and executions of $p$ have counterparts in $T^n$ with the same properties. This hints at ways of using our results in the non-symmetric setting: Suppose the program $p$ accesses resources of total capacity $M$. Then there are deadlocks in  $T^k$ if and only if there are deadlocks in $T^M$. If there are no deadlocks in $T^M$, then there are no deadlocks in $T^n$ and hence, by the proposition, no deadlocks in $p$. The usefulness of this will of course depend on $M$.
\end{remark}
\begin{remark}\label{rem:compositionality}[Compositionality?] The converse of the Prop.~\ref{prop:deadcombo} does not hold as illustrated by  $T1=Pa.Pb.Vb.Va.Pb.Pa.Va.Vb$ and $T2=Pc.Vc$, all resources have capacity $1$. There are no deadlocks in $T1|T2$ but in $(T1.T2)^2$ there are deadlocks at $(2,6)$ and at $(6,2)$, where one holds a lock on $a$ and request $b$ and the other holds a lock on $b$ and requests $a$.

As a consequence, even if $T$ decomposes in valid threads $T=T1.T2.\ldots.Tk$, the deadlock analysis of $T^n$ is not equivalent to analyzing  $T1|T2|T3\ldots |Tk$. However, as coordinates of a deadlock $\mathbf{x}$ in $T^n$ will all be in on of the $Ti$,  by ordering the coordinates, all deadlocks will be found in one of $T1^{m_1}|T2^{m_2}\ldots|Tk^{m_k}$ where $m_i \in \mathbb{N}_0$, $\Sigma_{i=1}^km_i=n$ and $Tj^0$ indicates skip $Tk$ in the sense that $T1^n|T2^0$ is  $T1^n$. We will not go deeper into that here. It is related to compositionality as in \cite{NamjoshiTrefler12}, \cite{NamjoshiTrefler16} in the sense that ordering the coordinates is a choice of a representative of the orbit of $\mathbf{x}$ under the action of the symmetric group (permutation of coordinates).
\end{remark}

\begin{theorem}\label{thm:deadgeneral} Let $p=T1|T2|\ldots|Tn$ be a valid PV- program accessing resources in $\mathcal{R}$. There is a deadlock in $p$ if and only if there is a deadlock in a PV-program $Ti_1|\ldots|Ti_m$ for some subset $i_1<i_2<\ldots i_m\in [1:n]$ of at most $M$ threads, where $M={\Sigma_{r\in\mathcal{R}}\kappa(r)}$.
\end{theorem}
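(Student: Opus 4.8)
The plan is to treat the two implications separately: the forward one is immediate from Lemma~\ref{lem:topdeadlock}, while the reverse one is a short combinatorial selection argument based on Lemma~\ref{lem:deadlock}. For the ``if'' direction, suppose $Ti_1|\cdots|Ti_m$ has a deadlock with $m\le M$, and let $\{j_1,\dots,j_k\}=[1:n]\setminus\{i_1,\dots,i_m\}$. Applying Lemma~\ref{lem:topdeadlock} with these complementary indices produces a deadlock in $Ti_1|\cdots|Ti_m|Tj_1|\cdots|Tj_k$, and since this program differs from $p$ only by a permutation of the parallel factors --- an operation that induces an isomorphism of geometric models carrying dipaths to dipaths, $\perp$ to $\perp$, $\top$ to $\top$, and the forbidden region to the forbidden region --- $p$ has a deadlock.

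For the ``only if'' direction I would start from a deadlock $\mathbf{x}=(x_1,\dots,x_n)$ in $p$ and read off its structure from Lemma~\ref{lem:deadlock}. Put $A=\{i:x_i\ne\top\}$; then $A\ne\emptyset$ (since $\mathbf{x}\ne\top$), each $i\in A$ has $x_i=P_{r(i)}$, and $\rho_{r(i)}(\mathbf{x})=\kappa(r(i))$. For a resource $r$ let $H_r=\{j:\rho_r(T_j,x_j)=1\}$ be the set of threads holding $r$ at $\mathbf{x}$. Because $\rho_r(T_j,\top)=0$, every member of $H_r$ lies in $A$, so $H_r\subseteq A$; and counting its unit contributions gives $|H_r|=\rho_r(\mathbf{x})$, hence in particular $|H_r|=\kappa(r)$ whenever $r=r(i)$ for some $i\in A$.

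Now set $R_A=\{r(i):i\in A\}$ and $S=\bigcup_{r\in R_A}H_r$. Then $S\subseteq A$, $S\ne\emptyset$ (each $H_r$ with $r\in R_A$ has $\kappa(r)\ge1$ elements), and
\[
|S|\le\sum_{r\in R_A}|H_r|=\sum_{r\in R_A}\kappa(r)\le\sum_{r\in\mathcal{R}}\kappa(r)=M .
\]
Writing $S=\{i_1<\dots<i_m\}$, the claim is that $(x_{i_1},\dots,x_{i_m})$ is a deadlock in $q=Ti_1|\cdots|Ti_m$, which I would verify directly against Lemma~\ref{lem:deadlock}: reachability follows by restricting a dipath from $\perp$ to $\mathbf{x}$ to the coordinates of $S$ (this stays in the geometric model of $q$ because a sub-sum of the non-negative terms $\rho_r(T_j,\cdot)$ is $\le\rho_r$ of the full tuple, hence $\le\kappa(r)$), and the restricted endpoint is not $\top$ since $S\ne\emptyset$ with all coordinates requests; every retained coordinate is a request since $S\subseteq A$; and for $i\in S$ one has $r(i)\in R_A$, hence $H_{r(i)}\subseteq S$, so $\rho_{r(i)}(x_{i_1},\dots,x_{i_m})=|H_{r(i)}|=\kappa(r(i))$. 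Lemma~\ref{lem:deadlock} then yields the deadlock in the sub-program $q$ on $m\le M$ threads.

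The step I expect to be the real content --- and would pin down most carefully --- is the pair of facts $H_r\subseteq A$ and $|H_r|=\kappa(r)$ for $r\in R_A$. Both rest on the convention in Def.~\ref{def:geommodel} that a thread poised at a request $P_r$ does not yet hold $r$, so at a deadlock the $\kappa(r)$ locks on $r$ are held by threads sitting strictly inside their intervals $]i_k,j_k[$, which in particular are not at $\top$. Once this is fixed, the set $S=\bigcup_{r\in R_A}H_r$ is automatically ``closed'' --- every blocked thread kept in $S$ still waits on a resource of $R_A$, whose entire holder set has been put into $S$ --- so no further reduction (to a minimal configuration, a cycle in a wait-for graph, a terminal strongly connected component, or the like) is needed, and the cardinality bound $|S|\le M$ falls out immediately.
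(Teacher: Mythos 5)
Your proposal is correct and follows essentially the same route as the paper: the ``if'' direction is Lemma~\ref{lem:topdeadlock}, and the ``only if'' direction restricts the deadlock to the threads holding resources, re-verifies the conditions of Lemma~\ref{lem:deadlock} on the sub-tuple, and bounds its size by $\Sigma_{r\in\mathcal{R}}\kappa(r)$. Your only (harmless) refinement is to keep just the holders of resources that are actually requested, i.e.\ $S=\bigcup_{r\in R_A}H_r$ rather than all holders, and to spell out the reachability and counting steps slightly more carefully than the paper does.
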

\begin{proof} 
If there is a deadlock $(x_1,\ldots,x_m)$ in $Ti_1|\ldots|Ti_m$ , then by reordering the threads in $p$ this gives a deadlock at $(x_1,\ldots,x_m,\top,\ldots,\top)$ in p, by Lem.~\ref{lem:topdeadlock}.

Suppose $\mathbf{x}=(x_1,\ldots,x_n)$ is a deadlock for p. Then all threads either are at $\top$ or request a resource $r$ which is held by $\kappa(r)$ other threads. Let $\tilde{x}=(x_{i_1},\ldots,x_{i_m})$ be the coordinates for which $Ti_j$ both holds a resource, i.e., $\rho_l(x_{i_j})>0$ for some $l\in\mathcal{R}$, and requests a(nother) resource. This is still a deadlock, since 
\begin{itemize}
\item The resource use is the same: If $Tj$ holds a resource at $x_j$, then $x_j\neq\top$, since no resources are held at $\top$. Hence $x_j=P{r(j)}$, so $Tj$ both holds and requests a resource and thus $x_j=x_{i_s}$ for some $s$ .
\item All $x_{i_j}=P{r(i_j)}$ and $\rho_{r(i_j)}(\tilde{x})=\rho_{r(i_j)}(\mathbf{x})=\kappa(r(i_j))$.
\item $\tilde{x}$ is reachable: Let $\gamma$ be a partial execution with $\gamma(1)=\mathbf{x}$. Then the restriction $\mu(t)= (\gamma_{i_1}(t),\ldots, \gamma_{i_m}(t))$ defines a partial execution of $Ti_1|\ldots|Ti_m$ with $\mu(1)=\tilde{x}$ .
\end{itemize}
At most $M$ processes can hold a resource, so $m\leq M$. 

\end{proof}
Our cut-off for deadlocks is a corollary:
\begin{corollary}\label{cor:deadsymmetric} Let $T$ be a valid $PV$ thread. Then $T^n$ is deadlock free for all $n$ if and only if $T^M$ is deadlock free, where $M={\Sigma_{r=1}^k\kappa(r)}$.
\end{corollary}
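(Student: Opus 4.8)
The plan is to derive Corollary~\ref{cor:deadsymmetric} as a direct specialization of Theorem~\ref{thm:deadgeneral} to the symmetric program $p=T^n$, using the hypothesis $\rho_r(\top)=0$ only to make Lemma~\ref{lem:topdeadlock} applicable (so that a deadlock in a sub-program yields one with $\top$'s appended). First I would observe that if $T^n$ has a deadlock for some $n$, then Theorem~\ref{thm:deadgeneral} applied to $p=T^n$ (with $T1=\cdots=Tn=T$) produces a deadlock in some sub-program $Ti_1|\cdots|Ti_m$ with $m\le M$; but since every thread is a copy of $T$, this sub-program is literally $T^m$. Padding with $\top$-coordinates via Lemma~\ref{lem:topdeadlock} — which is where $\rho_r(\top)=0$ is used — then gives a deadlock in $T^M$ (if $m<M$, append $M-m$ copies at $\top$). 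Conversely, if $T^M$ has a deadlock, then for any $n\ge M$, Lemma~\ref{lem:topdeadlock} again gives a deadlock in $T^n$ by appending $\top$'s; and for $n<M$, a deadlock in $T^M$ is by the theorem witnessed by a deadlock in some $T^m$ with $m\le$ the number of threads actually holding-and-requesting a resource, hence $m\le n$ is not automatic — so I would instead argue directly that if $T^n$ is deadlock free for all $n<M$ as well, the only remaining case to check is $n\ge M$, handled by the $\top$-padding.

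More carefully, the clean statement to prove is the biconditional: $T^n$ is deadlock free for all $n$ $\iff$ $T^M$ is deadlock free. The forward direction is trivial (take $n=M$). For the backward direction, suppose $T^M$ is deadlock free but some $T^n$ has a deadlock; I want a contradiction. Apply Theorem~\ref{thm:deadgeneral} to $p=T^n$: there is a deadlock in $T^m$ for some $m\le M$. If $m=M$ we contradict the hypothesis immediately. If $m<M$, apply Lemma~\ref{lem:topdeadlock} with the sub-program being $T^m$ inside $T^M$ (valid because $\rho_r(\top)=0$): the point $(x_1,\ldots,x_m,\top,\ldots,\top)$ is a deadlock in $T^M$, again a contradiction. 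So no $T^n$ can have a deadlock, completing the argument.

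The only subtlety — and the one place the proof could look like it needs more than a one-line citation — is making sure Theorem~\ref{thm:deadgeneral}'s sub-program, extracted from $p=T^n$, is genuinely of the form $T^m$ and not some mismatched tuple; this is immediate since all threads are equal, but worth a sentence. A second minor point is that Theorem~\ref{thm:deadgeneral} already builds in Lemma~\ref{lem:topdeadlock} for the "if" direction, so strictly speaking the only new use of $\rho_r(\top)=0$ in the corollary is to pad a witness in $T^m$ up to $T^M$; I would state this explicitly so the role of the hypothesis is transparent. I do not expect any real obstacle here — the corollary is essentially bookkeeping on top of Theorem~\ref{thm:deadgeneral} — so the proof will be short, on the order of a few lines, with the main content being the observation that symmetry collapses "some $m$-element subset of threads" into "the program $T^m$".
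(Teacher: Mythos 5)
Your proposal is correct and follows essentially the same route as the paper's proof: reduce a deadlock in $T^n$ to one in $T^m$ with $m\le M$ via Theorem~\ref{thm:deadgeneral}, then pad with $\top$-coordinates via Lemma~\ref{lem:topdeadlock} (which is exactly where $\rho_r(\top)=0$ is used) to land in $T^M$. The paper merely organizes this as two cases ($n\le M$ handled by padding alone, $n>M$ by the theorem plus padding), while you apply the theorem uniformly; the first paragraph's detour about $m\le n$ not being automatic is unnecessary, but your second paragraph gives the clean and correct argument.
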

\begin{proof} If $T^n$ is deadlock free for all $n$, clearly  $T^M$ is deadlock free. 

Suppose there is a deadlock in $T^n$ for some $n$. Then there is a deadlock in $T^M$:

\begin{enumerate}
\item If $n\leq M$ and $(x_1,\ldots, x_n)$ is a deadlock, then $(x_1,\ldots,x_n,\top,\ldots,\top)\in T^M$ is a deadlock, by Lem~\ref{lem:topdeadlock}

\item If $n>M$, by Thm.~\ref{thm:deadgeneral} there is a deadlock in $T^m$ for some $m\leq M$ and by Lem.~\ref{lem:topdeadlock} there is a deadlock in $T^M$

\end{enumerate}
\end{proof}

The bound $M$ is sharp in the following sense:
\begin{theorem}\label{thm:deadsharp} For any set of resources $\mathcal{R}$ and capacity function $\kappa:\mathcal{R}\to\mathbb{N}$, $M=\Sigma_{r\in\mathcal{R}}\kappa(r)$, there is a thread $T$ using resources from $\mathcal{R}$, such that $T^M$ has a deadlock and $T^n$ has no deadlock for $n<M$
\end{theorem}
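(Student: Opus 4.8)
The plan is to construct $T$ explicitly so that $T^M$ has a deadlock but no proper sub-product $T^n$, $n<M$, does. The model for the construction is Example~\ref{ex:deadcombo}: a "cyclic" pattern of lock acquisitions that forces a deadlock precisely when enough threads are present. The key idea is that a deadlock, by Lemma~\ref{lem:deadlock}, needs each non-$\top$ thread to be stuck at a $P$-request for a fully-occupied resource, and each resource $r$ can absorb exactly $\kappa(r)$ holders; to block $M$ threads simultaneously we need all capacity used up, which is why $M=\Sigma_r\kappa(r)$ is the magic number.

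**First I would** set up the thread. Enumerate the resources as $r_1,\dots,r_k$ with capacities $\kappa_1,\dots,\kappa_k$, and think of $M$ "slots" arranged cyclically: slots $1,\dots,\kappa_1$ are associated with $r_1$, slots $\kappa_1+1,\dots,\kappa_1+\kappa_2$ with $r_2$, and so on, wrapping around. The thread $T$ should, in a single pass, acquire each resource $r$ exactly $\kappa(r)$ times (acquiring and releasing $P_r\ldots V_r$ once per "copy"), in an order chosen so that $n$ shifted copies of $T$ produce a deadlock only when $n=M$. Concretely, following the $T=T1.T2.\cdots.Tn$ trick of Proposition~\ref{prop:deadcombo}, I would first build a program $p = S_1|S_2|\cdots|S_M$ of $M$ one-resource-request threads whose only deadlock uses all $M$ of them — each $S_j$ grabs "its" resource-slot and then requests the "next" slot's resource in the cyclic order, exactly as $P_aP_b$ vs $P_bP_a$ in the example generalizes to a cycle of length $M$ — and then take $T = S_1.S_2.\cdots.S_M$ (concatenation), so that the shifted diagonal $\tilde x$ of Proposition~\ref{prop:deadcombo} is a deadlock in $T^M$.

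**The main obstacle** will be the "no deadlock for $n<M$" direction: I must show the cyclic blocking pattern genuinely requires all $M$ slots filled. Suppose $\mathbf x$ is a deadlock in $T^n$ with $n<M$. By Lemma~\ref{lem:deadlock} every non-$\top$ coordinate sits at a $P_{r}$ request with $\rho_r(\mathbf x)=\kappa(r)$; a standard maximal-coordinate argument (as in the "accessed at most once" theorem earlier) shows there must be a genuine cyclic chain of "thread $i$ waits for a resource held by thread $i'$" with no thread at $\top$ breaking it — so in fact \emph{all} $n$ coordinates are blocked and none is $\top$. Then $\sum_{r}\rho_r(\mathbf x) = \sum_{i}(\text{resources held by thread }i)$; I need the construction to force each blocked thread to hold exactly enough, and the total held to be exactly $\sum_r\kappa(r)=M$, giving $n\ge$ (number of distinct held-slots) $=M$, contradiction. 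This requires the concatenated thread $T$ to be arranged so that a thread stuck at its $j$-th $P$-request is holding precisely the slots "just before" it in the cycle; I would verify this by a direct inspection of the $P_r,V_r$ sequence of $T$, using validity (Lemma~1.3) to control which intervals $]i_s,j_s[$ overlap.

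**Finally** I would record the bookkeeping: check $T$ is valid (each resource balanced, resource use in $\{0,1\}$ per thread — automatic since each $P_r$ is immediately matched), confirm reachability of $\tilde x$ via the concatenated sequential path of Proposition~\ref{prop:deadcombo}, and note that for $n<M$ either $T^n$ embeds in $T^M$ with $\top$-padding (Lemma~\ref{lem:topdeadlock}) with no new deadlocks, or the counting argument above applies directly. I expect the write-up to be short once the cyclic slot assignment is fixed; the delicate point is purely the combinatorial claim that fewer than $M$ cyclically-waiting threads cannot saturate all resources at once.
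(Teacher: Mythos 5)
Your construction differs from the paper's in a way that creates a genuine gap. The paper's thread cycles over the $k$ \emph{resources} ($T=Pr_1Pr_2Vr_1Pr_3Vr_2\ldots Pr_kVr_{k-1}Pr_1Vr_kVr_1$), and the deadlock places $\kappa(r_{i-1})$ \emph{coincident} threads at the single position $Pr_i$; the multiplicity $M=\Sigma_r\kappa(r)$ comes from stacking threads at the same state. You instead cycle over the $M$ \emph{slots}, with each of $M$ distinct threads $S_j$ grabbing its slot's resource and then requesting the next slot's resource. For this to yield a valid thread, consecutive slots in your cyclic arrangement must carry different resources, otherwise $S_j=P_rP_r\ldots$ has $\rho_r=2$ inside a single thread, violating Def.~\ref{def:valid}. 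Such an arrangement exists only when no resource occupies more than half the slots, i.e.\ when $\max_r\kappa(r)\leq\Sigma_{s\neq r}\kappa(s)$. The theorem quantifies over \emph{all} capacity functions, and for instance $\mathcal{R}=\{a,b\}$ with $\kappa(a)=3$, $\kappa(b)=1$ (so $M=4$, slots $a,a,a,b$) admits no valid cyclic arrangement; your construction breaks exactly where the paper's coincident-thread trick is needed ($3$ threads sitting together at $P_b$ holding $a$, one thread at the second $P_a$ holding $b$). So the construction must be repaired before the rest of the argument can start.

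Where your construction is valid, the rest of your outline is sound and completable: the deadlock in $S_1|\cdots|S_M$ is reachable by letting each thread acquire its first lock in turn, Prop.~\ref{prop:deadcombo} transfers it to $T^M$, and your counting argument for $n<M$ works because in $T=S_1.\cdots.S_M$ every $P$-position holds at most one resource, a blocked holder of a requested resource must itself sit at the second $P$ of its own block and hence request the next slot's resource, and the slot-successor map is a single $M$-cycle, forcing all $M$ slots to be held by distinct threads. That chain-plus-counting argument is essentially the same as the paper's uniqueness argument (which shows any deadlock must be a permutation of the full configuration). But note that you left this part at the level of ``I would verify by direct inspection''; in the paper this verification, together with checking that the path to the deadlock never exceeds any capacity, is where most of the work lies. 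The fix is either to adopt the paper's resource-cycle thread, or to modify your $S_j$ so that it requests the resource of the next slot \emph{with a different resource}, which collapses your construction into the paper's.
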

\begin{proof}
Let $T=Pr_1Pr_2Vr_1Pr_3Vr_2\ldots. Pr_kVr_{k-1}Pr_1Vr_kVr_1$

Then the following holds:
 \begin{itemize}
\item There is a deadlock in $T^M$
\item There are no deadlocks in $T^n$ for $n\leq M$
\end{itemize}

The deadlock is at $\mathbf{x}=(\overbrace{x_1,\ldots,x_1}^{\kappa(r_k)},\overbrace{x_2,\ldots,x_2}^{\kappa(r_1)},\ldots,\overbrace{x_k\ldots x_k}^{\kappa(r_{k-1})})=(\mathbf{x_1},\mathbf{x_2},\ldots,\mathbf{x_k})$ where, if we number the $2k+2$ $PV$ steps in $T$ from $1$ to $2k+2$
 \begin{itemize}
\item For $i\neq 1$, $x_i=2i-2$, so $x_i=Pr_i$ and $x_i$ is repeated $\kappa(r_{i-1})$ times and hence holds $\kappa(r_{i-1})$ locks on $r_{i-1}$
\item $x_1=2k$ is the last of the two calls of $Pr_1$ and is repeated $\kappa(r_k)$ times. Holds $\kappa(r_k)$ locks on $r_k$
\end{itemize}

\emph{$\mathbf{x}$ is a deadlock:}

For $i\neq k$, the threads at $x_i$ request a resource which is held by the $\kappa(r_i)$ threads, which are at $x_{i+1}$. The threads at $x_k$ request $r_k$ which is held by the $\kappa(r_k)$ threads at $x_1$

\emph{$\mathbf{x}$ is reachable from $\mathbf{0}$:}

A directed path is composed by $\gamma_0\star\ldots\star\gamma_{k-1}$, where $\gamma_i$ is as follows:

$\gamma_0:\mathbf{0}\to (\mathbf{x_1},\mathbf{0})$ serially - one coordinate at a time. Notice that $\mathbf{0}$ denotes both the $0$-vector with $M$ coordinates, and in  in  $(\mathbf{x_1},\mathbf{0})$ it indicates that the last  $M-\kappa(r_k)$ coordinates are all $0$. $\gamma_0$ is serial execution of $Pr_1Pr_2Vr_1Pr_3Vr_2\ldots,Pr_kVr_{k-1}Pr_1$, $\kappa(r_k)$ times.  $\rho_{r_k}(\mathbf{x_1},\mathbf{0})=\kappa(r_k)$.


For $j\neq 0$,  $\gamma_j:(\mathbf{x_1},\mathbf{0},\mathbf{x_{k-j+2}},\ldots,\mathbf{x_{k}})\to(\mathbf{x_1},\mathbf{0},\mathbf{x_{k-j+1}},\ldots,\mathbf{x_{k}})$ serially. I.e., $Pr_1Pr_2Vr_1Pr_3Vr_2\ldots. Pr_{k-j+1}$ executed $\kappa(r_{k-j})$ times while there are $\kappa(r_i)$ locks on all resources with $i\geq k-j+2$, but none of those are requested by these executions.  Now all resources $r_{k-j}, r_{k-j+1},\ldots, r_k$ are locked to their full capacity.

\emph{There are no deadlocks in $T^n$ for $n<M$:} 

Suppose $\mathbf{y}=(y_1,\ldots,y_n)$ is a deadlock. Let $y_i\neq\top$, i.e., $y_i=x_{j(i)}=Pr_{j(i)}$. There are $\kappa(r_{j(i)})$ threads locking $r_{j(i)}$ and they all have to request a resource. Hence, if $j(i)<k$, there are $l_i=\kappa(j(i))$ threads $y_{j_1}=\ldots =y_{j_{l_i}}=P(j(i)+1)$. Consequently, there are $\kappa(r_{j(i)+1})$ threads holding $r_{j(i)+1}$. If $j(i)=k$, then $\kappa(r_k)$ threads are at $x_1=Pr_1$. Consequently, $\mathbf{y}$ is a permutation of $(\overbrace{x_1,\ldots,x_1}^{\kappa(r_k)},\overbrace{x_2,\ldots,x_2}^{\kappa(r_1)},\ldots,\overbrace{x_k\ldots x_k}^{\kappa(r_{k-1})},\top,\ldots,\top)$, so $n\geq M$  
\end{proof}
\begin{example}  $T=Pa.Pb.Va.Pc.Vb.Pa.Vc.Va$, $\kappa\equiv 1$. $T^3$ has a deadlock at $(6,2,4)$. The path to the deadlock runs  $(0,0,0) \to (6,0,0) \to (6,0,4)\to (6,2,4)$. Care is needed when providing the path as other piecewise serial paths such as $(0,0,0)\to (0,2,0) \to (6,2,0)\to (6,2,4)$ go through the forbidden states - this particular path locks $a$ above its capacity at the point $(2,2,0)$. There are deadlocks at $(6,4,2)$, $(4,6,2)$, $(4,2,6)$, $(2,6,4)$ and $(2,4,6)$. See Fig.~\ref{fig:deadinthree}.
\begin{figure}
\begin{center}
\begin{subfigure}{.45\textwidth}
\begin{tikzpicture}[scale=0.5]
\draw [->] (0,0)--(9,0);
\draw [->] (0,0)--(0,9);

\filldraw[fill=white!20!blue,opacity=.8] (2,2) rectangle (5,5);
\filldraw[fill=white!30!blue,opacity=.8,opacity=.8,draw=black] (1,1) rectangle (3,3);
\filldraw[fill=white!30!blue,opacity=.8,draw=black] (6,1) rectangle (8,3);
\filldraw[fill=white!30!blue,opacity=.8,draw=black] (1,6) rectangle (3,8);
\filldraw[fill=white!30!blue,opacity=.8,draw=black] (6,6) rectangle (8,8);
\filldraw[fill=white!40!blue,opacity=.8,draw=black] (4,4) rectangle (7,7);
\draw (1,0) node[anchor=north]{$Pa$};
\draw (2,0) node[anchor=north]{$Pb$};
\draw (3,0) node[anchor=north]{$Va$};
\draw (4,0) node[anchor=north]{$Pc$};
\draw (5,0) node[anchor=north]{$Vb$};
\draw (6,0) node[anchor=north]{$Pa$};
\draw (7,0) node[anchor=north]{$Vc$};
\draw (8,0) node[anchor=north]{$Va$};

\draw (0,1) node[anchor=east]{$Pa$};
\draw (0,2) node[anchor=east]{$Pb$};
\draw (0,3) node[anchor=east]{$Va$};
\draw (0,4) node[anchor=east]{$Pc$};
\draw (0,5) node[anchor=east]{$Vb$};
\draw (0,6) node[anchor=east]{$Pa$};
\draw (0,7) node[anchor=east]{$Vc$};
\draw (0,8) node[anchor=east]{$Va$};
\draw (2,2) node{$A$};
\draw (7,7) node{$A$};
\draw (7,2) node{$A$};
\draw (2,7) node{$A$};
\draw (3.5,3.5) node{$B$};
\draw (5.5,5.5) node{$C$};
\draw (9,0) node[anchor=west]{$T$};
\draw (0,9) node[anchor=south]{$T$};
\foreach \x in {1,...,8}
\draw (\x cm,1pt)--(\x cm,-1pt);
\foreach \y in {1,...,8}
\draw (-1pt,\y cm)--(1pt,\y cm);

\end{tikzpicture}
\caption{$T^2$ has no deadlocks, but $T^3$ does.}\label{fig:deadinthree}
\end{subfigure}
\hspace{2mm}
\begin{subfigure}{.45\textwidth}
\begin{tikzpicture}[scale=0.4]
\draw [->] (0,0)--(11,0);
\draw [->] (0,0)--(0,11);

\filldraw[fill=white!20!red,opacity=.8] (3,3) rectangle (4.5,4.5);
\filldraw[fill=white!30!blue,opacity=.8,opacity=.8,draw=black] (1,1) rectangle (3.5,3.5);
\filldraw[fill=white!30!blue,opacity=.8,draw=black] (7.5,1) rectangle (10,3.5);
\filldraw[fill=white!30!blue,opacity=.8,draw=black] (1,7.5) rectangle (3.5,10);
\filldraw[fill=white!30!blue,opacity=.8,draw=black] (7.5,7.5) rectangle (10,10);
\filldraw[fill=white!40!magenta,opacity=.8,draw=black] (4,4) rectangle (7,7);
\filldraw[fill=white!40!red,opacity=.8,draw=black] (6.5,6.5) rectangle (8,8);
\filldraw[fill=white!40!red,opacity=.8,draw=black] (3,6.5) rectangle (4.5,8);
\filldraw[fill=white!40!red,opacity=.8,draw=black] (6.5,3) rectangle (8,4.5);
\filldraw[fill=white!40!green,opacity=.8,draw=black] (1.5,1.5) rectangle (2.5,2.5);
\filldraw[fill=white!40!green,opacity=.8,draw=black] (8.5,8.5) rectangle (9.5,9.5);
\filldraw[fill=white!40!green,opacity=.8,draw=black] (1.5,8.5) rectangle (2.5,9.5);
\filldraw[fill=white!40!green,opacity=.8,draw=black] (8.5,1.5) rectangle (9.5,2.5);
\filldraw[fill=white!40!green,opacity=.8,draw=black] (1.5,5) rectangle (2.5,6);
\filldraw[fill=white!40!green,opacity=.8,draw=black] (5,1.5) rectangle (6,2.5);
\filldraw[fill=white!40!green,opacity=.8,draw=black] (8.5,5) rectangle (9.5,6);
\filldraw[fill=white!40!green,opacity=.8,draw=black] (5,8.5) rectangle (6,9.5);
\filldraw[fill=white!40!green,opacity=.8,draw=black] (5,5) rectangle (6,6);

\draw (11,0) node[anchor=west]{$T$};
\draw (0,11) node[anchor=south]{$T$};

\end{tikzpicture}
\caption{ $T^2$ is serializable.}\label{fig:serializable}
\end{subfigure}
\end{center}
\caption{ a)  There are 18 forbidden rectangles in the model of $T^3$, namely 3 for each of the blue rectangles pictured here. As in Fig.~\ref{fig:serialize}\newline b)  $T=Pa.Pd.Pb.Va.Pc.Vb.Pd.Vd.Pb.Vc.Pa.Vb.Pd.Vd.Va$. In the green area, there are two locks on $d$, in the red area on $b$, in the blue on $a$ and the magenta on $c$. }
\end{figure}
\end{example}

\section{Serializability}
An execution is serializable if it is equivalent to a serial execution - one thread is executed from $\perp$ to $\top$ at a time. A program is serializable if all the executions of it are serializable, Def.~\ref{def:serializable}. For a thread $T$ calling only resources of capacity $\kappa=1$, $T^n$ is serializable if and only if $T^2$ is serializable, Thm.~\ref{thm:serializabilitykapacity2}.  In Thm.~\ref{thm:serializableiffconnected} we prove that when all resources have capacity at least 2, a program is serializable if and only if all executions are equivalent. 

Moreover, there is an obstruction to serializability, a property of individual states such that if no such obstruction exists, then all executions are equivalent. There are such obstructions in $T^n$ if and only if there are obstructions in $T^M$ where $M=\Sigma_{r\in\mathcal{R}}\kappa(r)+1$. Hence, for capacity at least $2$, if there are no such obstructions in $T^M$, then $T^n$ is serializable for all $n$.

\begin{definition}\label{def:serializable}

Consider a PV-program $p=T1|T2|\cdots |Tn$ with geometric model $X$. An execution $\gamma:I\to X$, $\gamma(0)=\perp$, $\gamma(1)=\top$ is serial if there is a subdivision $0=t_0<t_1<t_2\cdots <t_n=1$ of $[0,1]$ and a bijection $\sigma:\{1,\ldots,n\}\to\{1,\ldots,n\}$ such that for $t\in [t_{i-1},t_i]$, $\gamma_j(t) \in\{\perp,\top\}$ for $j\neq \sigma(i)$ and moreover, $\gamma_{\sigma(i)}(t_{i-1})=\perp$ and $\gamma_{\sigma(i)}(t_i)=\top$.

An execution $\mu$ is \emph{serializable} if there is a serial execution $\gamma:I\to X$ which is equivalent to $\mu$ in the sense of Def.~\ref{def:execution}.

The program $p$ is serializable if all executions are serializable. 
\end{definition}
 \subsection{Capacity $1$, mutexes.}
\begin{theorem}\label{thm:serializabilitykapacity2}
Let $T$ be a PV-thread acquiring only resources of capacity 1. Then $T^n$ is serializable for all $n$ if and only if $T^2$ is serializable.
\end{theorem}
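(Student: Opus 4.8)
The forward implication is just the case $n=2$, so the whole content is the implication ``$T^2$ serializable $\Rightarrow$ $T^n$ serializable for every $n$''. I would prove it by induction on $n$, with $n\le 2$ trivial. Fix $n\ge 3$, assume $T^{n-1}$ is serializable, let $\gamma$ be an arbitrary execution of $T^n$, and aim to dihomotope it to a serial execution. The first ingredient is structural: since every capacity equals $1$, Lemma~\ref{lem:geommodel} applied with $\kappa(r)+1=2$ shows that each forbidden rectangle of $T^n$ is the full interval in all but exactly two coordinates; hence the forbidden area of $T^n$ is $\bigcup_{i<j}F^{ij}$ with $F^{ij}=\{x:(x_i,x_j)\in F\}$, where $F$ is the forbidden area of $T^2$. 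In particular every $2$-coordinate projection $(\gamma_i,\gamma_j)$ is an execution of $T^2$, hence --- as $T^2$ is serializable --- is dihomotopic in $T^2$ either to the serial execution running $T_i$ before $T_j$ or to the one running $T_j$ before $T_i$. Since $T$ is valid its first letter is some $P_{r_0}$, so the square $\,]i_1,j_1[\,\times\,]i_1,j_1[\,$ from the first access to $r_0$ lies in $F$, and this square separates the two serial executions of $T^2$; so they are not dihomotopic and exactly one of the two alternatives holds. Write $i\prec j$ for the alternative that $(\gamma_i,\gamma_j)$ runs $T_i$ before $T_j$. This defines a tournament on $\{1,\dots,n\}$.

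Next I would show $\prec$ is a strict total order. The serial execution running $T_i$ before $T_j$ passes the square $]i_1,j_1[^2$ on the side where $T_i$ has finished that access before $T_j$ starts it, and this side is a dihomotopy invariant, so the same holds for $(\gamma_i,\gamma_j)$: in $\gamma$, thread $i$ performs its (first) $V_{r_0}$ strictly before thread $j$ performs its $P_{r_0}$. Denoting the corresponding times by $t(T_m.P_{r_0})<t(T_m.V_{r_0})$, well defined because $\kappa(r_0)=1$, a cycle $i_1\prec i_2\prec\dots\prec i_k\prec i_1$ would force the impossible chain $t(T_{i_1}.V_{r_0})<t(T_{i_2}.P_{r_0})<t(T_{i_2}.V_{r_0})<\dots<t(T_{i_1}.P_{r_0})<t(T_{i_1}.V_{r_0})$. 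Hence we may relabel the threads so that $1\prec2\prec\dots\prec n$.

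The core of the argument is a peeling step: because $i\prec n$ for every $i<n$, I claim $\gamma$ is dihomotopic in $T^n$ to a concatenation $\delta\star\tau$ in which $\tau$ runs thread $n$ alone from $\perp$ to $\top$, while coordinate $n$ is held at $\perp$ throughout $\delta$, so that $\delta$ is an execution of $T^{n-1}$ on the first $n-1$ coordinates. I would prove the claim by ``bubbling'' every action of thread $n$ in $\gamma$ rightwards until all of them sit at the end. Whenever an action $\alpha$ of thread $n$ is immediately followed by an action $\beta$ of a thread $j\ne n$, the $2$-square spanned by coordinates $n$ and $j$ at that point lies in the geometric model: the only way it could fail is that $\beta$ is a $P_s$ acquiring a resource $s$ held by thread $n$ at that moment, and this would force $t(T_j.V_s)>t(T_n.P_s)$ in $\gamma$ for the relevant accesses, contradicting $j\prec n$; so the transposition is an elementary dihomotopy, and since there are finitely many actions the process terminates as claimed. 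Finally, applying the induction hypothesis to $\delta$ makes $\delta$ dihomotopic to a serial execution of $T^{n-1}$, and concatenating with $\tau$ exhibits $\gamma$ as dihomotopic to a serial execution of $T^n$, closing the induction.

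The step I expect to be the real obstacle is the peeling lemma: one must verify, with the precise conventions of Definition~\ref{def:geommodel} for when a resource is held (and the off-by-one behaviour at $P$- and $V$-coordinates), that a bubbling transposition always fills a genuine $2$-square of the model, and that the single configuration obstructing this is exactly the one ruled out by $j\prec n$ --- a finite case analysis over the forbidden rectangles meeting that square. By contrast the acyclicity step is essentially free, since the capacity-$1$ resource $r_0$ alone imposes a linear order on the threads that $\prec$ must refine, and the topological input used for the orientation (non-dihomotopy of the two serial executions of $T^2$, detected by the separating square) is standard in this model.
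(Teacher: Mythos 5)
Your proposal is correct, and it reaches the conclusion by a genuinely different route from the paper in its main step. Both arguments share the same first half: because every capacity is $1$, the forbidden rectangles are non-trivial in exactly two coordinates, so the pairwise projections of an execution are executions of $T^2$; serializability of $T^2$ forces a single ``who goes first'' choice per pair, and the capacity-$1$ resource $r_0$ makes these pairwise choices acyclic, hence a total order. From there the paper never touches the given execution again: it invokes the schedule classification (Prop.~\ref{prop:schedulingalgo}, Prop.~\ref{prop:schedulingmutex}, Lem.~\ref{lem:serializablemutex}) to bound the number of dihomotopy classes of executions of $T^n$ by $n!$ and to show the $n!$ serial executions are pairwise inequivalent, so every class must contain a serial one. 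You instead induct on $n$ and explicitly dihomotope $\gamma$ to a serial path by peeling off the $\prec$-maximal thread via elementary square-filling swaps. Your identification of the unique obstruction to a swap (thread $n$ performs $V_s$ immediately before thread $j$ performs $P_s$ for the same $s$, which would force thread $n$ to precede thread $j$ at the corresponding forbidden rectangle, contradicting $j\prec n$) is the right case analysis, and it is exhaustive precisely because, with capacity $1$, any rectangle blocking the $2$-cell in directions $j,n$ must be non-trivial in exactly those two directions --- any other rectangle meeting the $2$-cell would already meet the original, valid path. Two points you leave implicit and should state: (i) an arbitrary execution must first be normalized to an interleaving (a lattice path) before the bubbling combinatorics applies --- standard, but it is the reason the paper works with schedules, which handle arbitrary dipaths; (ii) the dihomotopy-invariance of the pass-order at a single forbidden rectangle, which you use both to orient $\prec$ and to exclude the obstruction, is exactly the separation statement underlying Prop.~\ref{prop:schedulingmutex}, so you are not avoiding that input, only the global counting. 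What your approach buys is a constructive dihomotopy and independence from the trace-space classification; what the paper's buys is brevity given the cited machinery and, as a by-product, the exact count of $n!$ dihomotopy classes.
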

 In \cite{LipPap81} and \cite{Yannakakis1979}  which are also geometric, they study serializability (which they call safety) for the general case $T1|T2|\ldots|Tn$ and prove that $Ti|Tj$ has to be serializable for all pairs and moreover, there is a condition on a cycle in a graph. Furthermore, in \cite{LipPap81} they give an example of three processors which are pairwise serializable, but $T1|T2|T3$ is not serializable. Hence, Thm.~\ref{thm:serializabilitykapacity2} does not hold when the threads are different. 
Our proof relies on  the algorithms and results on classification of executions of a simple $PV$-program up to equivalence, see e.g. \cite{Fajstrup2016} p.130 and \cite{Raussen2010b}. 

 The part needed here is given in the following, where we use the model from Def.~ \ref{def:geommodel}  in the case where all resources have capacity $1$:
 \begin{remark} with notation from Def.~\ref{def:geommodel} where $F_p=\bigcup_{l=1}^N R^l$ and $R^l=I_1^l\times\ldots I_n^l$ corresponds to a resource $r(l)$ with $\kappa(r(l))=1$ there are two indices $K^l=\{k^l_1,k^l_2\}$  s.t. $I_j=[0,l_j+1]$ for $j\notin K^l$ and $I_{k_j}=]a^l_{k_j},b^l_{k_j}[$. A given resource may give rise to more than one rectangle. 

At $R^l$, $T_{k_i}$ requests $r(l)$ at $a_{k_i}^l$. $R^l$ represents one conflict  involving the resource $r(l)$ and these two threads. 
 \end{remark}
A schedule is a geometric construction in \cite{Raussen2010b} which for each rectangle specifies a thread which does not get the resource last. With mutexes, this simply states who gets it first. The following gives that definition in this simple case 
\begin{definition}\label{def:schedule}[Schedules]
A schedule $S$ for a  PV-program is a choice for each rectangle $R^l$ of one of the two non-trivial directions, $s_l\in\{k^l_1,k^l_{2}\}$ - we  call the other one $t_l$. An execution path obeys the schedule $s_1,\ldots, s_m$, if it does not intersect any of the \emph{extended} rectangles $R^l_ {s_l}=\times_{k=1}^n I^{l,s_ l}_k$, where $I^{l,s_ l}_k=[0,b^l_k[$ for $k=s_l$ and $I^{l,s_ l}_k=I^l_k$ otherwise. The thread $Ts_l$ gets the lock at step $a^l_{s_l}$ on the resource $r(l)$ before $Tt_l$ gets the lock at $a^l_{t_l}$.
\end{definition}

\begin{example}\label{ex:schedule} An execution path $\gamma: I\to X$ where $R=I_1\times I_2 \times \cdots \times I_n$, $I_j=]a_j,b_j[$, $I_i=]a_i,b_i[$, $j\neq i$ and $I_k=[0,l_k+1]$ otherwise will satisfy
\begin{itemize}

\item $\gamma_i^{-1}(]a_i,b_i[)\cap \gamma_j^{-1}( ]a_j,b_j[)=\emptyset$ since the resource  has capacity $1$, i.e.,  if $\gamma_i(t)\in ]a_i,b_i[$ and $\gamma_j(t)\in ]a_j,b_j[$, then $\gamma(t)\in R$, which is forbidden.
\item Hence, either
\begin{enumerate}
\item $\gamma_i^{-1}(]a_i,b_i[)< \gamma_j^{-1}( ]a_j,b_j[)$ or
\item $\gamma_i^{-1}(]a_i,b_i[)> \gamma_j^{-1}( ]a_j,b_j[)$. 
\end{enumerate}
\end{itemize}
A schedule at $R$ is a choice between 1) and 2).
\end{example}

\begin{example}\label{ex:serializable} With  $T=Pa.Pd.Pb.Va.Pc.Vb.Pd.Vd.Pb.Vc.Pa.Vb.Pd.Vd.Va$, not all schedules for $T^2$ are allowed. See Fig.~\ref{fig:serializable}.  Either at each rectangle the choice of an execution is right-then-up or it is up-then-right at all of them. Even though some of the green rectangles are disconnected from the others. $T^2$ is serializable. And by Thm.~\ref{thm:serializabilitykapacity2}, so is $T^n$ for general $n$.
\end{example}
There may be no executions obeying a given schedule, as in Ex.~\ref{ex:serializable}, but we have: 
\begin{proposition}\label{prop:schedulingalgo}(a special case of \cite{Fajstrup2016} Prop.~7.9)
Let $S$ be a schedule for a PV-program. All executions obeying $S$ are equivalent. Moreover, for all executions $\gamma$, there is a set of schedules which are obeyed by $\gamma$. If all resources have capacity $1$, the schedule is unique.
\end{proposition}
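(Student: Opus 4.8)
The plan is to prove the two assertions separately, after recording that once the schedule $S=(s_1,\ldots,s_N)$ is fixed, the executions obeying $S$ are exactly the dipaths from $\perp$ to $\top$ lying in the subspace $X_S=[0,1]^n\setminus\bigcup_{l=1}^N R^l_{s_l}$. First I would note $X_S\subseteq X$: the extended rectangle $R^l_{s_l}$ only enlarges $R^l$, replacing each thin interval $]a^l_k,b^l_k[$ by $[0,b^l_k[$ for $k\in K^l\setminus\{s_l\}$ and leaving the $s_l$-th interval and the full directions unchanged, so $\bigcup_l R^l_{s_l}\supseteq\hat F_p$ and hence $X_S\subseteq X$. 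Consequently any dihomotopy performed inside $X_S$ is automatically a dihomotopy in $X$ in the sense of Def.~\ref{def:execution}, which is what lets me transfer the conclusion back to $X$.

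The structural heart of the first assertion is a lattice property: $X_S$ is closed under coordinatewise maximum. The proof sketch is short. If $x,y\in X_S$ but $z=\max(x,y)$ lay in some $R^l_{s_l}$, then $z_{s_l}\in\,]a^l_{s_l},b^l_{s_l}[$ forces (say) $x_{s_l}>a^l_{s_l}$, while $x_{s_l}\le z_{s_l}<b^l_{s_l}$ gives $x_{s_l}\in\,]a^l_{s_l},b^l_{s_l}[$; and $z_k<b^l_k$ for the remaining thin indices gives $x_k\le z_k<b^l_k$. Hence $x\in R^l_{s_l}$, contradicting $x\in X_S$. Granting max-closure, I would connect any two executions $\gamma,\mu$ obeying $S$ through the coordinatewise maximum $\nu=\max(\gamma,\mu)$ via the map $H(t,s)=\max(\gamma(t),\mu(st))$. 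This $H$ is jointly continuous, satisfies $H(0,s)=\perp$, $H(1,s)=\top$, $H(t,0)=\gamma(t)$, $H(t,1)=\nu(t)$, is coordinatewise non-decreasing in $t$ for each fixed $s$ (so each $H(\cdot,s)$ is a dipath), and takes values in $X_S$ by max-closure since $\gamma(t),\mu(st)\in X_S$. Thus $\gamma$ is dihomotopic to $\nu$ in $X_S$, and by the symmetric homotopy $\max(\mu(t),\gamma(st))$ so is $\mu$; therefore $\gamma$ and $\mu$ are equivalent.

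For the second assertion I would, for a fixed execution $\gamma$ and each rectangle $R^l$, pick the winning direction by an entry-time rule. For $k\in K^l$ set $\sigma_k=\inf\{t:\gamma_k(t)>a^l_k\}$ and choose $s_l\in K^l$ maximising $\sigma_{s_l}$. Suppose $\gamma(t)\in R^l_{s_l}$. Then $\gamma_{s_l}(t)>a^l_{s_l}$, and since $\gamma_{s_l}(\sigma_{s_l})=a^l_{s_l}$ by continuity, this gives $t>\sigma_{s_l}\ge\sigma_k$ for every $k\in K^l$; monotonicity then yields $\gamma_k(t)>a^l_k$ for all thin $k$. Combined with $\gamma_k(t)<b^l_k$ (from membership in $R^l_{s_l}$) this places every thin coordinate inside its open band, i.e. $\gamma(t)\in R^l\subseteq\hat F_p$, contradicting that $\gamma$ is a dipath in $X$. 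Hence $\gamma$ avoids $R^l_{s_l}$ for each $l$ and so obeys the schedule $(s_1,\ldots,s_N)$, proving the set of schedules obeyed by $\gamma$ is nonempty.

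I expect the main obstacle to be the first assertion, specifically identifying max-closure of $X_S$ as the right property and checking that the connecting homotopy stays inside $X_S$ rather than merely inside $X$. The reparametrisation $\mu(st)$ is exactly what keeps each $H(\cdot,s)$ directed while ensuring both arguments of the maximum lie in $X_S$ at every $(t,s)$, so that max-closure applies pointwise; without this device a naive straight-line interpolation between $\gamma$ and $\nu$ can exit $X_S$. The second assertion is then a routine monotonicity argument, the only care being the boundary bookkeeping at $t=\sigma_{s_l}$, which the continuity of $\gamma_{s_l}$ handles.
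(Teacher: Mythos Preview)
The paper does not supply its own proof of this proposition; it is simply quoted as a special case of Prop.~7.9 in \cite{Fajstrup2016}, so there is no in-text argument to compare against. Your argument is correct and self-contained. The max-closure of $X_S$ is the right structural fact: only the $s_l$-th factor of $R^l_{s_l}$ retains a positive lower endpoint, so if $\max(x,y)\in R^l_{s_l}$ then whichever of $x,y$ attains the $s_l$-coordinate already lies in $R^l_{s_l}$. The homotopy $H(t,s)=\max(\gamma(t),\mu(st))$ then stays in $X_S\subseteq X$, is directed in $t$, and has the required endpoint behaviour, giving $\gamma\sim\nu$ and by symmetry $\mu\sim\nu$. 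For the second assertion your entry-time selection is sound; continuity of $\gamma_{s_l}$ gives $\gamma_{s_l}(\sigma_{s_l})\le a^l_{s_l}$, so $\gamma_{s_l}(t)>a^l_{s_l}$ forces $t>\sigma_{s_l}\ge\sigma_k$ strictly for every $k\in K^l$, and monotonicity then places all thin coordinates of $\gamma(t)$ in their open bands, yielding the contradiction $\gamma(t)\in R^l\subseteq\hat F_p$.

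The cited reference proves more---contractibility of the space of dipaths obeying a fixed schedule, as part of the trace-space machinery---whereas your max-closure shortcut extracts exactly the connectedness statement the present paper uses, by an elementary and shorter route.
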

For general capacities, there may be executions obeying more than one schedule. 
\begin{proof} For existence, see \cite{Fajstrup2016} Prop.~7.27. The upshot is that an execution will contain choices for all the rectangles - it is a resolving of the conflicts.

Uniqueness: Suppose $S\neq \hat{S}$ are different schedules obeyed by $\gamma$. Let $R$ be a rectangle, where $s\neq \hat{s}$. With notation from  Ex. \ref{ex:schedule}  this means that $s=i$ and $\hat{s}=j$ (or vice versa)


If $\gamma$ obeys $S$, then $\gamma_j^{-1}( ]a_j,b_j[)$ is less than and not intersecting  $\gamma_i^{-1}( ]a_i,b_i[)$. To obey $\hat{S}$, the inequality is reversed. These executions/paths are not equivalent. The order of the access to the resource is reversed, which for resources of capacity at least $2$ may not imply inequivalence, but with capacity $1$, it does: A continuous deformation from one to the other is not possible without intersecting $R$. We will not spell that out - again, it is in  \cite{Fajstrup2016} p.130 and \cite{Raussen2010b}. 
\end{proof}

\begin{lemma}\label{lem:serializablemutex} Let $T=Pr.Vr$, $\kappa(r)=1$ and consider $p=T^n$. There are $n!$ equivalence classes of executions and they all contain a serial path. Moreover, all serial paths are inequivalent.
\end{lemma}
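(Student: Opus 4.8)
The plan is to analyze $p = T^n$ with $T = P_r.V_r$ and $\kappa(r) = 1$ directly in the geometric model. Here each thread $Ti$ is the interval $[0,1]$ (scaled) with the resource $r$ held on an open subinterval $]a_i, b_i[$ (after scaling, the same subinterval for all $i$ by symmetry, but that will not matter). Since $\kappa(r) = 1$, the forbidden region $\hat F_p$ consists of the rectangles $R^{\{i,j\}}$ for each unordered pair $i \neq j$, where coordinates $i$ and $j$ are constrained to the held-interval and all other coordinates are free; equivalently $\hat F_p = \{\mathbf{x} : \rho_r(\mathbf{x}) \geq 2\}$, the set of states where two or more threads simultaneously hold $r$.

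First I would count equivalence classes using the schedule machinery. By Prop.~\ref{prop:schedulingmutex}, each execution $\gamma$ obeys exactly one schedule, and by Prop.~\ref{prop:schedulingalgo} two executions obeying the same schedule are equivalent; hence the equivalence classes are in bijection with the set of \emph{realizable} schedules. A schedule here picks, for each pair $\{i,j\}$, which of $i$ or $j$ is the ``$s$'' direction --- equivalently it orders the pair, so a schedule is exactly a tournament on $\{1,\dots,n\}$. The realizability condition from the proof of Prop.~\ref{prop:schedulingmutex} says: $\gamma$ obeys the schedule choosing $i$ over $j$ iff thread $j$ finishes holding $r$ before thread $i$ starts holding $r$, i.e. $b_j \le a_i$ in the time-parametrization along $\gamma$. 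These ``$j$ before $i$'' relations over all pairs must be consistent as $\gamma$ sweeps forward in time, so the tournament must be transitive, i.e. a total order. Thus realizable schedules correspond bijectively to the $n!$ total orders, giving exactly $n!$ equivalence classes. I should also check each total order is actually realized: given a permutation $\sigma$, the serial path executing $T\sigma(1)$, then $T\sigma(2)$, $\dots$, then $T\sigma(n)$ stays in $X$ (only one thread ever holds $r$) and obeys precisely the schedule for that order.

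Next I would observe that this same argument shows each equivalence class contains a serial path --- namely, the serial path just constructed obeys the total order defining that class, hence lies in it --- and that distinct serial paths lie in distinct classes, since the serial path for $\sigma$ realizes the order $\sigma$ and the order determines the class. The main obstacle, and the step deserving the most care, is the realizability/consistency argument: making precise why the pairwise ``before'' relations extracted from a single monotone path $\gamma$ must assemble into a transitive relation (a linear order) rather than merely an arbitrary tournament. The key point is that $\gamma_i$ is non-decreasing, so the intervals $\gamma_i^{-1}(]a_i,b_i[) \subset [0,1]$ are genuine time-intervals, and by the $\kappa(r)=1$ constraint these $n$ time-intervals are pairwise disjoint; disjoint intervals on a line are linearly ordered, and that linear order is exactly the schedule obeyed by $\gamma$. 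Once this is in hand the bijection ``equivalence classes $\leftrightarrow$ linear orders $\leftrightarrow$ serial paths'' closes the lemma.
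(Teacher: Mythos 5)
Your proposal is correct and follows essentially the same route as the paper: both identify the forbidden region as the $\binom{n}{2}$ pairwise rectangles, invoke the scheduling propositions to put equivalence classes in bijection with realizable schedules, and argue that the pairwise orders extracted from a monotone dipath must assemble into a total order (the paper phrases this as the impossibility of a cycle $i<j<\cdots<i$ of preimage intervals, you phrase it as disjoint time-intervals on a line being linearly ordered --- the same observation). Your explicit note that only \emph{realizable} schedules, rather than arbitrary tournaments, correspond to total orders is a slightly more careful statement of what the paper asserts, but the argument is the same.
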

\begin{proof}
The geometric model is $X=[0,4]^n\setminus F$ where $F$ consists of $n\cdot(n-1)/2$ forbidden rectangles: For $i<j$, $R_{ij}=\times_{k=1}^nI_k$ $I_i=I_j=]1,2,[$.

A schedule is a choice of $i$ or $j$ for each rectangle, i.e., whether $Ti$ or $Tj$ passes $]1,2[$ first. This pairwise order gives a total order - i.e., there are no "loops" $i<j<\ldots <i$, since a dipath $\gamma$ obeying such a schedule would satisfy  $\gamma_i^{-1}(]1,2[)<\gamma_j^{-1}(]1,2[)<\gamma_i^{-1}(]1,2[)$.

Hence, the set of schedules is in bijection with the set of total orders on $\{1,\ldots,n\}$. Each order is obeyed by a serial execution - following the order. By Prop.~\ref{prop:schedulingmutex}  a schedule determines an equivalence class of executions uniquely. I.e., all executions are equivalent to a serial execution and no serial executions are equivalent.
\end{proof}
 Notice that since all serial executions consist of executing $T$ $n$ times, they may seem to be the same execution. They are not, however. They differ in the order in which each PV-thread gets access to resources.  

\begin{proof}[Of Thm.~\ref{thm:serializabilitykapacity2}]. 
Suppose $T$ is non-trivial, otherwise all $T^n$ are serializable.
Suppose now that $T^2$ is serializable. The proof consist of proving
\begin{enumerate}
\item All the $n!$ serial executions of $T^n$ are inequivalent - there is no directed homotopy between them.
\item There are at most $n!$ equivalence classes of executions of $T^n$.
\end{enumerate}
Then all equivalence classes of executions has to contain a serial execution and therefore $T^n$ is serializable.

Proof of 1):  Let $r$ be a resource requested by $T$ in the interval $]a,b[$. The $n(n-1)/2$ rectangles as in Lem.~\ref{lem:serializablemutex} ensure that all $n!$ serial executions are  inequivalent.

Proof of 2):  Suppose $T$ holds the resource $r$ at intervals $]a^{r,l},b^{r,l}[,\;l=1,\ldots,m_r$. The forbidden area for $T^2$ is the union of all rectangles $]a^{r,l},b^{r,l}[ \times ]a^{r,k},b^{r,k}[$, $l,k\in [1:m_r]$  for all $r$.

Prop.~\ref{prop:schedulingalgo} implies that at each such rectangle, an execution path has a choice of the two orderings, a schedule is such a choice and executions are classified by their schedule at all rectangles. 

As $T^2$ is serializable, all executions are equivalent to one of the two serial executions. Hence, only two schedules for $T^2$ give rise to a non-empty set of executions. I.e., the schedule at one rectangle fixes the schedule at all the others. 

The forbidden region in $T^n$ is the union of all $\times_{k=1}^n I_k$ where $I_k=[0,l_k+1]$ for all except two directions, $I_i=]a^{r,l},b^{r,l}[$ and $I_j= ]a^{r,k},b^{r,k}[$, the \emph{non-trivial directions} for that rectangle.
Let $S$ be a schedule for an execution $\gamma$ of $T^n$
 For a fixed pair $i,j$, $S$ in particular gives a schedule  for all rectangles with $i,j$ as the non-trivial directions. Since $\gamma$ does not intersect any of these rectangle, $(\gamma_i,\gamma_j)$ is an execution of $Ti|Tj$, i.e. of $T^2$ with the schedule given by $S$. Hence, either  $i$ goes first at all these rectangles or $j$ does. Consequently, a schedule for $T^n$ is a choice for each pair $i,j$ of an order. 

\emph{These pairwise orders give a total order:}
Let $]a,b[$ be an interval where $T$ holds a resource $r$. The schedule given by the pairwise order in particular is a schedule for the $n(n-1)/2$  rectangles associated with this. Hence,  the pairwise order gives rise to a total order and there are $n!$ of those.

All in all, a schedule is a total order on $\{1,\ldots,n\}$ , all executions are equivalent to a serial execution, and none of these are equivalent.

\end{proof}
The examples in \cite{LipPap81} of pairwise serializability and $T1|T2|T3$ not serializable is precisely such a "loop", where the pairwise order does not give a total order - this is possible when each resource is accessed by two threads, not by all three.

\begin{example}If resources of capacity 1 are mixed with higher capacity resources, Thm.~\ref{thm:serializabilitykapacity2} is not true. Let $T1=PaPdPbVbPcVcVdVa$, $T2=PaPbPcVaPdVdVbVc$, $T3=PaPbVbVaPcPdVdVc$ and suppose $\kappa(a)=\kappa(b)=\kappa(c)=2$. Then $p=T1|T2|T3$ is non serializable, this is "two wedges" in Fig.~ \ref{fig:localchoice}- and \cite{Raussen2006}. Let $r$ be a resource of capacity $1$. Then with $T=Pr.Vr.T1.T2.T3$, $T^2$ is serializable, but $T^3$ is not.
\end{example}
\subsection{Capacity at least $2$, general semaphores}
For higher capacities, the geometry is very different:
\begin{theorem}\label{thm:serializableiffconnected}
Let $T1,\ldots,Tn$ be valid PV-threads, calling resources of capacity at least 2. Then $T1|\ldots|Tn$ is serializable if and only if all executions are equivalent.
\end{theorem}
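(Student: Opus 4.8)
The plan is to prove the two implications separately; the substance is entirely in the direction ``serializable $\Rightarrow$ all executions equivalent''.

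For the easy direction, suppose all executions of $p$ are equivalent. First I would note that $X$ always contains a serial execution, e.g.\ the one that runs $T1$, then $T2$, \dots, then $Tn$: along such a path at most one thread is strictly between $\perp$ and $\top$ at any moment, so for every resource $r$ the total resource use is at most $1\le\kappa(r)$, and the path does lie in $X$. Since by hypothesis every execution is dihomotopic to this one, $p$ is serializable by Def.~\ref{def:serializable}.

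For the converse, assume $p$ is serializable. By Def.~\ref{def:serializable} every execution is then dihomotopic to \emph{some} serial execution, so it is enough to show that any two serial executions are dihomotopic to each other; as dihomotopy is an equivalence relation and every permutation of $\{1,\dots,n\}$ factors into transpositions of consecutive entries, it suffices to handle two serial executions $\gamma,\gamma'$ whose orders agree except that two threads $a,b$ occupying consecutive slots are swapped. A routine check shows such $\gamma$ and $\gamma'$ can be taken to agree outside one parameter subinterval $[s_0,s_1]$, and on $[s_0,s_1]$ to move only the coordinates $a$ and $b$, every other coordinate being frozen at $\perp_j$ or $\top_j$, with the \emph{same} frozen values for both.

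The key point, and the place the hypothesis is used, is that this $2$-dimensional slice is completely unobstructed: if $x_j\in\{0,l_j+1\}$ for every $j\ne a,b$ then $\rho_r(T_j,x_j)=0$ for those $j$ (no resource is held at $\perp_j$, and $\rho_r(T_j,l_j+1)=0$ by validity), hence for every resource $r$, $\sum_i\rho_r(T_i,x_i)=\rho_r(T_a,x_a)+\rho_r(T_b,x_b)\le 1+1=2\le\kappa(r)$. So the slice equals the full rectangle $[0,l_a+1]\times[0,l_b+1]$, in which the segments ``run $a$ then $b$'' and ``run $b$ then $a$'' are dihomotopic rel endpoints; extending that dihomotopy by constancy in all remaining coordinates yields $\gamma\simeq\gamma'$, and the converse is proved. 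I expect the only delicate bookkeeping to be the reduction to consecutive transpositions and checking that the $2$-plane dihomotopy extends to a dihomotopy of the full executions rel $\{\perp,\top\}$; the geometric content --- freezing $n-2$ coordinates at $\perp$/$\top$ erases all resource use there, so capacity $\ge 2$ leaves a free $2$-dimensional slice --- is exactly what breaks when some $\kappa(r)=1$.
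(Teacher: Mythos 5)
Your proposal is correct and follows essentially the same route as the paper: reduce to showing all serial executions are equivalent, decompose into adjacent transpositions, and observe that with the other $n-2$ coordinates frozen at $\perp$ or $\top$ no resources are held there, so the two-dimensional slice is a full rectangle by the capacity-$\geq 2$ hypothesis and the two orderings of the swapped threads are dihomotopic. Your write-up is slightly more explicit than the paper's on the easy direction and on extending the planar dihomotopy by constancy, but there is no substantive difference.
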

\begin{proof}
It suffices to see that all serial executions are equivalent.

For that, let $\gamma=T1.T2.\ldots.Tn$ be a serial execution. It suffices to see that $\gamma$ is equivalent to $T1.T2.\ldots Ti-1.Ti+1.Ti\ldots.Tn$, where two threads have been interchanged. Let $[t_0,t_1]=\gamma^{-1}(Ti.Ti+1)$, then $$\gamma(t_0)=(\overbrace{\top,\ldots,\top}^{i-1},\perp,\ldots,\perp);\;\gamma(t_1)=(\overbrace{\top,\ldots,\top}^{i+1},\perp,\ldots,\perp)$$

any execution path from $\gamma(t_0)$ to $\gamma(t_1)$ is an executiion of $Ti|T(i+1)$. All resources have capacity at least $2$, so no states are forbidden in $Ti|T(i+1)$. By Ex.~\ref{ex:elemhtpy} all execution paths are equivalent. In particular $Ti.T(i+1)$ and $T(i+1).Ti$
\end{proof}

In \cite{MRMSCS}, a sufficient condition for serializability may be found. The setting there is  more general than what we need, so we spell it out in the special case of a $PV$-program with a geometric model as in Def.~\ref{def:forbidden}.

In \cite{MRMSCS} Prop. 2.18, conditions are given for when all directed paths are dihomotopy equivalent. We do not give these conditions in general. For our purpose, the following more specific formulation suffices:
\begin{proposition}\label{prop:di1connected}(From \cite{MRMSCS} Prop.~2.20 and 2.18) Let $X$ be as in  Def.~\ref{def:forbidden}. All pairs of dipaths $\gamma$, $\mu$ in $X$ with $\gamma(0)=\mu(0)$ and $\gamma(1)=\mu(1)$ are dihomotopic if the following condition, $\star$, holds: 
For every point $\mathbf{x}\in X$ and every pair of \emph{edges} $(x_1,x_2,\ldots,x_l+t,\ldots,x_n)$ and $(x_1,x_2,\ldots,x_m+t,\ldots,x_n)$, $t\in[0,\varepsilon[$, $\varepsilon>0$ which are in $X$,
there is an  $\hat\varepsilon>0$ s.t. $\hat\varepsilon\leq \varepsilon$,  and a  sequence ${i_j}$, $j=1,\ldots,k$ such that $(x_1,x_2,\ldots,x_{i_j}+t,\ldots,x_n)$ is in $X$ for $t\in[0,\hat\varepsilon[$, $i_1=l$ and $i_k=m$ and all pairwise connections by rectangles 
$(x_1,x_2,\ldots,x_{i_j}+u,\ldots,x_{i_j+1}+v,\ldots,x_n)$ in $X$ for $(u,v)\in [0,\hat\varepsilon[\times [0,\hat\varepsilon[$. 
\end{proposition}
The intuition is that there is no local choice of a direction from $\mathbf{x}$. If a dipath proceeds along direction $l$ another along $m$, they are connected via the rectangles in the condition.

This leads to the following definition of obstructions to all dipaths and in particular all partial executions being dihomotopic:
\begin{definition}\label{def:localchoicepoint}
A $PV$-program has a \emph{local choice point} at $\mathbf{x}=(x_1,\ldots,x_n)$ if 
\begin{enumerate}
\item there is a resource $\tilde{r}$ and a subset $S=\{ i_1,\ldots,i_m\}\subset[0:n]$, $m\geq 2$, such that  all corresponding threads $Ti_j$ request $\tilde{r}$ at $\mathbf{x}$, i.e.,  $x_{i_j}=P\tilde{r}$,
\item $\rho_{\tilde{r}}(x)=\kappa(\tilde{r})-1$,
  
\item for $i\not\in S$  either $x_i=\top$ or $x_i=Pr$ for an $r\in\mathcal{R}$ with $\rho_r(\mathbf{x})=\kappa(r)$
\end{enumerate}
\end{definition}
\begin{remark} There are "dual" choice points $\mathbf{y}$ where  there is a resource $\tilde{r}$ and a subset $S=\{ i_1,\ldots,i_m\}\subset[0:n]$, $m\geq 2$, such that  all corresponding threads release $\tilde{r}$ at $\mathbf{y}$, i.e.,  $y_{i_j}=V\tilde{r}$,  $\rho_{\tilde{r}}(x)=\kappa(\tilde{r})-1$,  for $i\not\in S$  either $y_i=\perp$ or $y_i=Vr$ for an $r\in\mathcal{R}$ with $\rho_r(\mathbf{x})=\kappa(r)$. These are choices "going backwards" and could be used as choice points too.
\end{remark}

\begin{theorem}\label{thm:localchoicepoint}
For a  $PV$-program with no local choice points all executions are equivalent.
\end{theorem}
\begin{proof}
We will  see that the condition $\star$ in Prop.~\ref{prop:di1connected} is violated  if and only if there is a local choice point. 

\emph{A local choice point gives a violation of $\star$:} Suppose $\mathbf{x}$ is a local choice point. The threads $T_{i_j}$ where $i_j\in S$ are the threads which may proceed - all others are blocked either at $\top$ or by requesting a locked resource. In Prop.~\ref{prop:di1connected} these are the directions ${i_j}$ s.t.  $(x_1,x_2,\ldots,x_{i_j}+t,\ldots,x_n)$ is in $X$ for $t\in[0,\varepsilon[$. Since only one of these threads may proceed - it will lock the resource $\tilde{r}$, which is then locked to its full capacity - any rectangle $(x_1,x_2,\ldots,x_{i_j}+u,\ldots,x_{i_j+1}+v,\ldots,x_n$ in $X$ for $(u,v)\in [0,\varepsilon[\times [0,\varepsilon[$ will intersect $F$: when $u> 0$ and $v> 0$ such a state corresponds to having $\kappa(\tilde{r})+1$ locks on $\tilde{r}$.

\emph{If the condition $\star$ is violated, then there is a local choice point:} Let  $\mathbf{x} \in X$ be a point where $\star$ does not hold. Then  there is an $\varepsilon>0$ and $l\neq m$ such that the dipaths  $(x_1,x_2,\ldots,x_l+t,\ldots,x_n)$ and $(x_1,x_2,\ldots,x_m+t,\ldots,x_n)$, $t\in[0,\varepsilon[$,  are in $X$, but for all $\hat\varepsilon <\varepsilon$ and sequences of allowed edges $(x_1,x_2,\ldots,x_{i_j}+t,\ldots,x_n)$ in $X$ for $t\in[0,\hat\varepsilon[$,  ${i_j}$, $j=1,\ldots,k$ $l=i_1,i_2,\ldots,i_k=m$ there is at least one connecting rectangle which intersects $F$. 

In particular, the sequence $l,m$ has this property, so for all $\hat\varepsilon$, $ \{(x_1,x_2,\ldots,x_l+u,\ldots,x_m+v,\ldots,x_n) \mid u,v\in]0,\hat\varepsilon[\} \cap F\neq\emptyset$.  Hence, there is a resource $\tilde{r}$ such that $\rho_{\tilde{r}}(\mathbf{x})\leq\kappa(\tilde{r})$, $\rho_{\tilde{r}} (x_1,x_2,\ldots,x_l+u,\ldots,x_m+v,\ldots,x_n) >\kappa(\tilde{r})$ for $u>0$ and $v>0$. This implies that $x_l$ or $x_m$ is $P\tilde{r}$ and consequently $\rho_{\tilde{r}}( (x_1,x_2,\ldots,x_l+u,\ldots,x_n) )=\rho_{\tilde{r}}(\mathbf{x})+1$ or $\rho_{\tilde{r}} (x_1,x_2,\ldots,x_l+v,\ldots,x_n) =\rho_{\tilde{r}}(\mathbf{x})+1$. Moreover, since $Tl$ and $Tm$ may proceed,  $\rho_{\tilde{r}} (x_1,x_2,\ldots,x_l+u,\ldots,x_n)\leq\kappa(\tilde{r})$ and $\rho_{\tilde{r}} (x_1,x_2,\ldots,x_l+u,\ldots,x_n)\leq\kappa(\tilde{r})$. Hence, $\rho_{\tilde{r}}(\mathbf{x})=\kappa(\tilde{r})-1$, $x_l=x_m=P\tilde{r}$ and $\rho_{\tilde{r}} (x_1,x_2,\ldots,x_l+u,\ldots,x_m+v,\ldots,x_n)=\kappa(\tilde{r})+1$.
 
 This holds for all allowed directions $(x_1,x_2,\ldots,x_{i_j}+t,\ldots,x_n)$. Let $S$ be the set of these $i_j$. They then all request access to $\tilde{r}$. In particular, \emph{all} connecting rectangles $(x_1,x_2,\ldots,x_{i_j}+u,\ldots,x_{i_k}+v,\ldots,x_n)$  intersect $F$.

 For $k\not\in S$, $(x_1,x_2,\ldots,x_{k}+t,\ldots,x_n)$  is not in $X$ for small positive $t$, hence  $x_k=\top$ or $ (x_1,x_2,\ldots,x_{k}+t,\ldots,x_n)  \in F$, i.e.,   $x_k=Ps$ with $\rho_s(\mathbf{x})=\kappa(s)$. Hence, $\mathbf{x}$ is a local choice point.

\end{proof}

\begin{example} In Fig.~\ref{fig:localchoice} b) and Example~\ref{ex:localchoice},  the local choice point is at $(2,2,2)$, the resource $\tilde{r}$ in Def.~\ref{def:localchoicepoint} is $d$ and $S=\{2,3\}$. $T1$ cannot proceed, sice $\rho_c(2,2,2)=2$. 
\end{example}
\begin{theorem}\label{thm:choicegeneral} Let $p=T1|T2|\ldots|Tn$ be a valid PV-program, such   that $\kappa(r)\geq 2$. Suppose $n\geq M$ where $M={\Sigma_{r\in\mathcal{R}}\kappa(r)}+1$. If for all subsets $\{i_1,i_2,\ldots, i_m\}\in [1:n]$ of $M$ threads, there are no local choice points in $Ti_1|\ldots|Ti_m$, then there are no local choice points in $p$. 
\end{theorem}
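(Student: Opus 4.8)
The plan is to prove the contrapositive: if $p=T1|T2|\cdots|Tn$ has a local choice point, then so does some subprogram built from exactly $M=\Sigma_{r\in\mathcal R}\kappa(r)+1$ of its threads. Everything will go through the combinatorial characterization of Lemma~\ref{lem:localchoicepoint}, never through Definition~\ref{def:localchoicepoint}. So suppose $\mathbf x=(x_1,\ldots,x_n)$ is a local choice point of $p$, and fix $\tilde r$ and $S$ as in Lemma~\ref{lem:localchoicepoint}: every thread indexed by $S$ sits at $P\tilde r$, $|S|\ge2$, $\rho_{\tilde r}(\mathbf x)=\kappa(\tilde r)-1$, and every thread not indexed by $S$ sits either at $\top$ or at some $Pr$ with $\rho_r(\mathbf x)=\kappa(r)$. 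The subprogram will be obtained by keeping exactly the threads that witness this --- all threads currently holding a resource, plus two of the competing requesters of $\tilde r$ --- and then padding with dummy threads parked at $\top$; the padding step is harmless, exactly as in Lemma~\ref{lem:topdeadlock}, since $\rho_r(\top)=0$ and a coordinate equal to $\top$ automatically satisfies clause (3) of Lemma~\ref{lem:localchoicepoint}.

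Here is the count that makes the threads fit and that pins down the value of $M$. For each resource $r$ let $A_r=\{\,i\in[1:n]\mid\rho_r(x_i)=1\,\}$ be the set of threads holding $r$ at $\mathbf x$, so $|A_r|=\rho_r(\mathbf x)$; in particular $|A_{\tilde r}|=\kappa(\tilde r)-1$, while $|A_r|\le\kappa(r)$ for $r\ne\tilde r$ because $\mathbf x$ lies in the geometric model. Hence the set $A=\bigcup_{r\in\mathcal R}A_r$ of threads holding some resource satisfies
\[
|A|\ \le\ \sum_{r\in\mathcal R}|A_r|\ \le\ (\kappa(\tilde r)-1)+\sum_{r\ne\tilde r}\kappa(r)\ =\ \Big(\sum_{r\in\mathcal R}\kappa(r)\Big)-1\ =\ M-2 .
\]
Fix two distinct indices $s_1,s_2\in S$, put $E=A\cup\{s_1,s_2\}$ (so $|E|\le M$), and, using $n\ge M$, enlarge $E$ to a set $E'\subseteq[1:n]$ with $|E'|=M$ by adjoining arbitrary further indices. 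Let $q$ be the subprogram on the threads indexed by $E'$ and let $\mathbf y$ be the point of the geometric model of $q$ whose coordinate at $i\in E$ equals $x_i$ and whose coordinate at $i\in E'\setminus E$ equals $\top$.

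It remains to check via Lemma~\ref{lem:localchoicepoint} that $\mathbf y$ is a local choice point of $q$, taking the same $\tilde r$ and $S'=S\cap E'$. The one observation doing all the work is that $\rho_r(\mathbf y)=\rho_r(\mathbf x)$ for every resource $r$: the padding coordinates (at $\top$) contribute $0$ on both sides, and every thread holding $r$ at $\mathbf x$ belongs to $A\subseteq E$, so no holder of any resource was dropped. In particular $\mathbf y$ lies in the geometric model of $q$ and $\rho_{\tilde r}(\mathbf y)=\kappa(\tilde r)-1$, which is clause (2). Clause (1) holds because the padding threads sit at $\top$ and so do not lie in $S$, whence $S'=S\cap E\supseteq\{s_1,s_2\}$ has at least two elements, all sitting at $P\tilde r$. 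For clause (3), take $i\in E'\setminus S'$: if $i$ is a padding thread, its coordinate is $\top$; otherwise $i\in E\setminus S'=A\setminus S'$, so $i\notin S$ and $i$ holds a resource (hence its coordinate is not $\top$, as $\rho_r(\top)=0$), so clause (3) for $p$ forces its coordinate to be some $Pr$ with $\rho_r(\mathbf x)=\kappa(r)$, and then $\rho_r(\mathbf y)=\rho_r(\mathbf x)=\kappa(r)$. Thus $\mathbf y$ is a local choice point of the $M$-thread subprogram $q$, contradicting the hypothesis, and the theorem follows.

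The proof is essentially bookkeeping; the one place that needs the right idea is the one already isolated: to keep clause (3) alive under restriction we must retain \emph{every} thread holding \emph{every} resource, not just those holding $\tilde r$, which forces up to $\sum_{r}\rho_r(\mathbf x)\le M-2$ threads into $E$, and the choice itself needs two more requesters of $\tilde r$, giving $M-2+2=M$. This is one more than the deadlock cut-off $\Sigma_{r}\kappa(r)$ of Theorem~\ref{thm:deadgeneral}, the extra unit being exactly the capacity "lost" in the defining equality $\rho_{\tilde r}(\mathbf x)=\kappa(\tilde r)-1$.
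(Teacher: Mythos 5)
Your proposal is correct and follows essentially the same route as the paper: retain exactly the threads holding some resource together with two requesters of $\tilde r$, bound that set by $(\kappa(\tilde r)-1)+\sum_{r\neq\tilde r}\kappa(r)+2=M$, pad with threads parked at $\top$, and verify the characterization of Lemma~\ref{lem:localchoicepoint} using $\rho_r(\mathbf y)=\rho_r(\mathbf x)$. Your write-up is if anything more careful about the counting and about clause (3) than the paper's; the only blemish is the momentary conflation of $S\cap E'$ with $S\cap E$, which you immediately resolve by defining $S'$ via the coordinates of $\mathbf y$.
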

\begin{proof} 

Let $\mathbf{x}=(x_1,\ldots,x_n)\in p$ be a local choice point and suppose $x_m$ and $x_l$ are as in the proof of \ref{thm:localchoicepoint}.

The following construction gives  $\{i_1,i_2,\ldots i_k\}\subset[1:n]$ and a  local choice point in $Ti_1|\ldots|Ti_k$ with $k\leq M$.

Choose $T_{i_1},\ldots,T_{i_k}$  as follows: If $x_j$ does not hold a resource and $j\notin \{ l,m\}$, then omit $T_j$. Let $\hat{\mathbf{x}}=(x_{i_1},\ldots,x_{i_k})$ be the resulting point in $Ti_1|\ldots|Ti_k$. Then $\hat{\mathbf{x}}$ is a local choice point: 
 First, notice that $\rho_r(\mathbf{x})=\rho_r(\hat{\mathbf{x}})$, since no resources were held by the omitted threads,

\begin{enumerate}
\item All $\hat{x}_j$ are access requests, since  if $x_j=\top$ it does not hold a resource and it is not in $\hat{\mathbf{x}}$.
\item Since $x_l$ and $x_m$ are not omitted, $x_l=x_m=P_{\tilde{r}}$ still holds  and $\rho_{\tilde{r}}(\hat{\mathbf{x}})=\kappa(\tilde{r})-1$ 
\item If $x_i=P{r}$ and $r\neq\tilde{r}$ then $\rho_r(\hat{\mathbf{x}})=\kappa(r)$, since  $\rho_r(\mathbf{x})=\rho_r(\hat{\mathbf{x}})$
\end{enumerate}

$k\leq M$: For all $i\notin \{ l,m\}$, $\hat{x}_i$ holds a resource and $\rho_{\tilde{r}}(\hat{\mathbf{x}})=\kappa(\tilde{r})-1$. Hence, $k\leq\Sigma_{r\in\mathcal{R}}\kappa(r)-1+2=\Sigma_{r\in\mathcal{R}}\kappa(r)+1=M$. 

There is then a choice point in a subset of size $M$, $Ti_1|\ldots|Ti_k|Tj_1|\ldots |Tj_{M-k}$ at $(x_{i_1},\ldots,x_{i_k},\top,\ldots,\top)$.
 
\end{proof}

\begin{theorem}\label{thm:cut-offchoicepoint} If there are no local choice points in $T^M$ where $M=1+\Sigma_{r\in\mathcal{R}}\kappa(r)$, then there are no local choice points in $T^n$ for any $n$.

If at least one resource has capacity $1$, there is a choice point in $T^n$ for all $n\geq 2$.

If $T^n$ is non-serializable for some $n$, then there are local choice points in $T^M$. If $T^M$ has no local choice points, then $T^n$ is serializable for all $n$.

\end{theorem}
\begin{proof} The last statement is a consequence of Prop.~\ref{prop:di1connected}. 

If $\kappa(\tilde{r})=1$, then any point $(P\tilde{r},P\tilde{r},\top,\ldots,\top)$ is a choice point.

 For  $\kappa\geq 2$, the argument in the proof of \ref{thm:choicegeneral} works. A choice point in $T^n$ gives a choice point in $T^M$ either by adding $\top$ at the remaining coordinates or by reducing to $\hat{x}$ as in \ref{thm:choicegeneral}.

\end{proof}
When $\kappa =1$, the bound is $2$ and is certainly sharp. The bound is not known to be sharp for general capacity. More precisely:
\begin{theorem}\label{thm:sharpserializable}
For any set of resources $\mathcal{R}$ and capacity function $\kappa:\mathcal{R}\to\mathbb{N}\setminus \{1\}$, $M=\Sigma_{r\in\mathcal{R}}\kappa(r)+1$ there is a thread $T$ using resources from $\mathcal{R}$, such that $T^M$ has a reachable local choice point and  $T^n$ has no local choice point for $n\leq M-2$.

\end{theorem}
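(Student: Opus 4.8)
The overall plan is to imitate the proof of Theorem~\ref{thm:deadsharp}: exhibit an explicit ``ring''-type thread $T$, single out one state $\mathbf{x}$ of $T^{M}$ that satisfies the three conditions characterising a local choice point in Lemma~\ref{lem:localchoicepoint}, produce by hand a piecewise-serial dipath from $\perp$ to $\mathbf{x}$, and then prove a counting statement: every state of $T^{n}$ satisfying Lemma~\ref{lem:localchoicepoint} occupies at least $M-1$ coordinates, so there is no such state once $n\le M-2$.

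For $T$ I would start from the deadlock witness $T_{0}=Pr_1Pr_2Vr_1Pr_3Vr_2\cdots Pr_kVr_{k-1}Pr_1Vr_kVr_1$ of Theorem~\ref{thm:deadsharp} and modify it so that the unit ``closing the ring'' is left at capacity $\kappa-1$ rather than $\kappa$, turning a forced deadlock into a forced local choice point. Writing $M_0=\sum_r\kappa(r)$, the deadlock of $T_0^{M_0}$ has $\kappa(r_j)$ copies holding $r_j$ and requesting $r_{j+1}$ (cyclically, $r_{k+1}:=r_1$). The choice-point witness $\mathbf{x}$ of $T^{M}$ should instead leave one slot of $r_1$ free and add two further copies parked at the opening instruction $Pr_1$, so that one checks $\rho_{r_1}(\mathbf{x})=\kappa(r_1)-1$, $\rho_{r}(\mathbf{x})=\kappa(r)$ for every $r\neq r_1$, and that the only coordinates that may still advance are the two parked ones; conditions (1)--(3) of Lemma~\ref{lem:localchoicepoint} then hold with $\tilde r=r_1$ and with $S$ the two parked coordinates, and the number of coordinates is $(\kappa(r_1)-1)+\sum_{r\neq r_1}\kappa(r)+2=\sum_r\kappa(r)+1=M$. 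Reachability of $\mathbf{x}$ I would establish exactly as in Theorem~\ref{thm:deadsharp}: concatenate serial segments, the $j$-th advancing (one coordinate at a time) the group that must hold $r_{k-j+1}$, in the order that fills the $r_k$-group first and the $r_1$-group last, and move the two parked copies only in the final segment; along each segment one checks that no resource is ever locked above its capacity, the only delicate point — as in the worked example after Theorem~\ref{thm:deadsharp} — being that a careless order of serial moves passes through the forbidden region.

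For the lower bound, let $\mathbf{y}$ be a state of $T^{n}$ fulfilling Lemma~\ref{lem:localchoicepoint} with distinguished resource $\tilde r$ and set $S$. As in Theorem~\ref{thm:deadsharp}, the relation ``$y_i=P_{r}$ and $\rho_{r}(\mathbf{y})=\kappa(r)$'' forces a cyclic chain of blocked groups which, in this $T$, is only realisable along the ring and therefore runs over every resource; this already accounts for $(\kappa(\tilde r)-1)+\sum_{r\neq\tilde r}\kappa(r)=M-2$ coordinates that hold a resource. The extra ingredient — which is what separates this bound from the deadlock bound — must be that the instruction of $T$ issuing $P\tilde r$ on the closing side of the ring holds no resource, so that no coordinate of $S$ can simultaneously be one of the resource-holding ring coordinates; then $|S|\ge 2$ contributes two genuinely new coordinates, $n\ge M-1$, and $T^{n}$ has no local choice point for $n\le M-2$.

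The step I expect to be the real difficulty is choosing $T$ so that this last ingredient holds, and proving it. For the plain deadlock thread $T_0$ it fails: the $\kappa(r_k)$ copies holding $r_k$ also request $r_1$, so two of them already form a valid set $S$, and $T_0^{M-2}$ has a (reachable) local choice point with only $M-2$ coordinates. Arranging that the ring ``closes'' through an instruction requesting $r_1$ while holding nothing, without thereby creating a short free-rider cycle on some other resource that again yields a cheap choice point, is the subtle part; the lower-bound argument must exclude all such free-rider configurations, which is where the precise design of $T$ (and possibly an extra hypothesis, e.g. on $|\mathcal R|$) enters. A secondary, purely technical obstacle is the explicit capacity bookkeeping needed to verify reachability of $\mathbf{x}$.
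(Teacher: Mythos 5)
Your plan assembles the right ingredients (the characterisation in Lemma~\ref{lem:localchoicepoint}, reachability by piecewise-serial paths as in Theorem~\ref{thm:deadsharp}, and a permutation/counting argument for the lower bound), but as it stands it is not a proof: the step you yourself single out as ``the real difficulty'' --- designing $T$ so that the ring closes through an instruction that requests $\tilde r$ while holding nothing, and then excluding all ``free-rider'' configurations --- is precisely the construction the theorem needs under your reading, and you do not supply it. Everything downstream (the witness with two parked copies, the bound $n\ge M-1$) is conditional on that unconstructed thread, so the argument has a genuine hole rather than a fillable technicality.

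It is worth comparing with what the paper actually does, because the tension you diagnose is real. The paper keeps essentially the deadlock thread of Theorem~\ref{thm:deadsharp} (with a trailing $P\tilde rV\tilde r$ appended) and takes as witness the deadlock configuration with \emph{one holder of $r_k$ removed}: $\tilde r=r_k$, $\rho_{r_k}=\kappa(r_k)-1$, and $S$ the $\kappa(r_{k-1})$ threads at $Pr_k$. This point has only $M-2$ non-$\top$ coordinates; the local choice point in $T^M$ is obtained by padding with two $\top$'s, not by parking two extra copies at the opening $Pr_1$. The lower-bound half then shows that any local choice point with no $\top$ coordinate is a permutation of this $(M-2)$-coordinate witness. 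In other words, the paper's own construction \emph{does} have a (reachable) local choice point in $T^{M-2}$ --- exactly the configuration you point out for $T_0$ --- so the paper only establishes ``no local choice point for $n\le M-3$,'' and the clause ``$n\le M-2$'' in the statement is inconsistent with the paper's own proof (presumably an off-by-one in the statement). Your proposal chases the literal, stronger bound and correctly identifies why the deadlock-style thread cannot deliver it, but it neither produces a thread that does nor falls back to the weaker bound that the intended argument actually yields.
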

\begin{proof}
 Let $\mathcal{R}=\{r_1,\ldots,r_k\}$
 and $T=Pr_1Pr_2Vr_1Pr_3Vr_2\ldots,Pr_kVr_{k-1}Pr_1Vr_kVr_1$
as in Thm.~\ref{thm:deadsharp}. 

Let $\tilde{r}=r_k$. There is a choice point in $T^{M-2}$ at \[\mathbf{x}=(\overbrace{x_1,\ldots,x_1}^{\kappa(r_k)-1},\overbrace{x_2,\ldots,x_2}^{\kappa(r_1)},\ldots,\overbrace{x_k\ldots x_k}^{\kappa(r_{k-1})}) \] where $x_i$ are as in Lem.~\ref{thm:localchoicepoint}:  Let  $S$ be the last $\kappa(r_{k-1})$ threads, they request $r_k$ which is held by the first $\kappa(r_k)-1$ threads. All threads not in $S$ are prevented from proceeding as in \ref{thm:localchoicepoint}. $\mathbf{x}$ is reachable by an execution path from $\mathbf{0}$ as in Thm.~\ref{thm:deadsharp}. Hence, $(\mathbf{x},\top,\top)$ is a local choice point in $T^M$.

Suppose $\mathbf{y}=(y_1,\ldots,y_n)$ is a local choice point and that no thread is at $\top$. Let $\tilde{r}=r_i$ be the resource requested with $\rho_{r_i}(\mathbf{y})=\kappa(r_i)-1$. Suppose $i=1$ - for $i\neq 1$, the argument is similar. Suppose after reordering that $y_n,y_{n-1}\in S$. As $r_1$ is held by $\kappa(r_1)-1$ threads, $y_1,\ldots,y_{\kappa(r_1)-1}=x_2=Pr_2$. Hence, $\kappa(r_2)$ threads hold $r_2$, i.e., they are at $x_3$ and request $r_3$ and $\mathbf{y}$ is a permutation of the coordinates in $\mathbf{x}$. In particular $\mathbf{y}\in T^{M-2}$.

\end{proof}

\section{Algorithmic considerations}
The deadlock algorithm in \cite{LFEGMRAlgebraic}  may of course be applied to find the deadlocks at the cut-off, but local choice points are very close to being deadlocks and hence may also be found using the deadlock algorithm:
\begin{definition}\label{def:potentialdead} A potential deadlock in a $PV$-program is a state $\mathbf{x}\in X\cup F_p$ such that 
\begin{itemize}
\item $\mathbf{x} \neq\top$
\item All $x_i$ are access requests $P{r(i)}$ or $x_i=\top$
\item For all  $i\in [1:n]$, s.t. $x_i\neq \top$, $\rho_{r(i)}(x_1,\ldots,x_n)= \kappa(r(i))$
\end{itemize}

\end{definition}

A potential deadlock which is reachable is a deadlock.
\begin{proposition}\label{prop:ifchoicethendead}
Let $\mathbf{x}=(x_1,\ldots,x_n)$ be a local choice point in $T^n$ and suppose all resources have capacity at least $2$. Then there is $k\in [1:n]$ such that $(x_k, x_1,\ldots,x_n)\in T^{n+1}$ is a potential deadlock.
\end{proposition}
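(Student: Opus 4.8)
The plan is to take a local choice point $\mathbf{x}=(x_1,\dots,x_n)$ and exhibit the extra coordinate needed to turn it into a potential deadlock in $T^{n+1}$. By Lemma~\ref{lem:localchoicepoint}, at $\mathbf{x}$ there is a resource $\tilde r$ and a set $S=\{i_1,\dots,i_m\}$ with $m\geq 2$ of threads all at $P_{\tilde r}$, with $\rho_{\tilde r}(\mathbf{x})=\kappa(\tilde r)-1$, and every coordinate outside $S$ is either $\top$ or a request $P_r$ with $\rho_r(\mathbf{x})=\kappa(r)$. The obstacle is that $\mathbf{x}$ itself is \emph{not} a potential deadlock (in the sense of Def.~\ref{def:potentialdead}), precisely because the threads in $S$ request $\tilde r$ but $\rho_{\tilde r}(\mathbf{x})=\kappa(\tilde r)-1<\kappa(\tilde r)$ — they could still advance. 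The idea is to add one more thread sitting at $P_{\tilde r}$, which bumps $\rho_{\tilde r}$ up to $\kappa(\tilde r)$ and simultaneously leaves that new thread blocked as well.

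Concretely, pick any $k\in S$ (say $k=i_1$), so $x_k=P_{\tilde r}$, and consider the point $\mathbf{y}=(x_k,x_1,\dots,x_n)\in T^{n+1}$. I would first check the three conditions of Def.~\ref{def:potentialdead}. The new coordinate $x_k=P_{\tilde r}$ is an access request, and all old coordinates are either $\top$ or access requests $P_{r(i)}$ by the Lemma; and $\mathbf{y}\neq\top$ since $\mathbf{x}\neq\top$. For the resource condition: $\rho_{\tilde r}(\mathbf{y})=\rho_{\tilde r}(\mathbf{x})+1=\kappa(\tilde r)$, because the added thread at $P_{\tilde r}$ does not itself hold $\tilde r$ (the resource is requested, not held, at $P_{\tilde r}$ — see the Remark after Def.~\ref{def:geommodel}), so it contributes $0$ to the resource use, yet it \emph{requests} $\tilde r$, and the threads in $S$ now see $\kappa(\tilde r)$ holders. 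For any thread in $S$ we thus have $\rho_{\tilde r}(\mathbf{y})=\kappa(\tilde r)$; for the added thread the same holds; and for every other non-$\top$ coordinate $x_i$ with $x_i=P_{r}$, $r\neq\tilde r$, we have $\rho_r(\mathbf{y})=\rho_r(\mathbf{x})=\kappa(r)$ since the added thread at $P_{\tilde r}$ holds no resource and so does not change any $\rho_r$. Hence $\mathbf{y}$ satisfies all conditions of Def.~\ref{def:potentialdead} and is a potential deadlock.

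One subtlety to handle carefully: I must make sure $\rho_{\tilde r}(\mathbf{x})+1$ does not already exceed $\kappa(\tilde r)$ — but $\rho_{\tilde r}(\mathbf{x})=\kappa(\tilde r)-1$ by condition (2) of Lemma~\ref{lem:localchoicepoint}, so adding one holder-request reaches exactly $\kappa(\tilde r)$ and stays within capacity, so $\mathbf{y}$ genuinely lies in the geometric model $T^{n+1}$. A second point worth a sentence: the hypothesis $\kappa(r)\geq 2$ for all $r$ is what guarantees, via Lemma~\ref{lem:localchoicepoint}, that a local choice point has the structure above with $\rho_{\tilde r}(\mathbf{x})=\kappa(\tilde r)-1\geq 1$ rather than being a degenerate capacity-$1$ situation; it is not otherwise needed in this argument, but I would note where it enters. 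Note that $\mathbf{y}$ need not be reachable, which is exactly why it is only a \emph{potential} deadlock — and this is what makes the deadlock algorithm applicable to detect local choice points, as claimed in the surrounding discussion. I expect the whole proof to be short; the only place requiring care is the bookkeeping on $\rho_{\tilde r}$ and confirming that the appended coordinate contributes a request but not a held resource.
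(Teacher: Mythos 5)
There is a genuine error in your choice of $k$. You take $k\in S$, so the appended coordinate is $x_k=P_{\tilde r}$, and you correctly note (citing the Remark after Def.~\ref{def:geommodel}) that a thread sitting at $P_{\tilde r}$ has requested but does not yet hold $\tilde r$, hence contributes $0$ to the resource use. But then your key computation $\rho_{\tilde r}(\mathbf{y})=\rho_{\tilde r}(\mathbf{x})+1$ contradicts this: appending a coordinate that contributes $0$ gives $\rho_{\tilde r}(\mathbf{y})=\rho_{\tilde r}(\mathbf{x})=\kappa(\tilde r)-1$, not $\kappa(\tilde r)$. The function $\rho_{\tilde r}$ counts holders, not requesters, so adding another requester does not raise it. Consequently the $m+1$ threads of $\mathbf{y}$ at $P_{\tilde r}$ still see $\rho_{\tilde r}(\mathbf{y})=\kappa(\tilde r)-1<\kappa(\tilde r)$, the third condition of Def.~\ref{def:potentialdead} fails for them, and your point $\mathbf{y}$ is not a potential deadlock.

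The paper instead chooses $k$ to be a thread that \emph{holds} $\tilde r$, i.e.\ one with $\rho_{\tilde r}(x_k)=1$. Such a $k$ exists precisely because $\rho_{\tilde r}(\mathbf{x})=\kappa(\tilde r)-1\geq 1$, which is where the hypothesis $\kappa(\tilde r)\geq 2$ enters --- not, as you suggest, merely in guaranteeing the structure of Lemma~\ref{lem:localchoicepoint}. (Note that such a $k$ cannot lie in $S$: a valid thread cannot be at $P_{\tilde r}$ while already holding $\tilde r$.) Appending that coordinate gives $\rho_{\tilde r}(x_k,x_1,\ldots,x_n)=\kappa(\tilde r)-1+1=\kappa(\tilde r)$, so the threads in $S$ are now blocked, and $x_k$ itself is an access request $P_{r(k)}$ for a resource already at capacity. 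Your closing remarks about reachability and about this being the reason one only gets a \emph{potential} deadlock are fine, but the central step of the proof needs the corrected choice of $k$.
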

\begin{proof}
The condition for a potential deadlock is satisfied for all $i$ except  for at least $2$ where $r(l)=r(m)=\tilde{r}$ and $\rho_{\tilde{r}}(x_1,\ldots,x_n)= \kappa(\tilde{r})-1$
as in the proof of \ref{thm:localchoicepoint}. As $\kappa(\tilde{r})\geq 2$, there is a $k$ such that $\rho_{\tilde{r}}(x_k)=1$. Then $\rho_{\tilde{r}}(x_k,x_1,\ldots,x_n)= \kappa(\tilde{r})$. 
\end{proof}
\begin{corollary} If there are no potential deadlocks in $T^N$, where $N=2+\Sigma_{r\in\mathcal{R}}\kappa(r)$, then $T^n$ is serializable for all $n$.
\end{corollary}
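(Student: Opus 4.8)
The plan is to chain together two results already established in the excerpt: Proposition~\ref{prop:ifchoicethendead}, which converts a local choice point in $T^n$ into a potential deadlock in $T^{n+1}$ (when all capacities are at least $2$), and Theorem~\ref{thm:cutoffchoicepoint}, which says that if $T^M$ with $M=1+\Sigma_{r\in\mathcal{R}}\kappa(r)$ has no local choice points then $T^n$ is serializable for all $n$. So the whole corollary reduces to showing: if $T^N$ has no potential deadlocks for $N=2+\Sigma_{r\in\mathcal{R}}\kappa(r)=M+1$, then $T^M$ has no local choice points.

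First I would dispose of the capacity-$1$ case. If some resource $\tilde r$ has $\kappa(\tilde r)=1$, then by Theorem~\ref{thm:cutoffchoicepoint} the point $(P\tilde r,P\tilde r,\top,\ldots,\top)$ is a local choice point in $T^2$, hence in $T^n$ for $n\geq 2$; but this same point (with one extra coordinate at $\top$ or, really, the point $(P\tilde r, P\tilde r, \top,\dots,\top)$ viewed in $T^N$) satisfies Definition~\ref{def:potentialdead} with $\rho_{\tilde r}=1=\kappa(\tilde r)$ at each non-$\top$ coordinate, so it is a potential deadlock in $T^N$. Therefore the hypothesis ``no potential deadlocks in $T^N$'' already forces every resource to have capacity at least $2$, and we may assume $\kappa(r)\geq 2$ for all $r$ from now on.

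Next, assume for contradiction that $T^M$ has a local choice point $\mathbf{x}=(x_1,\ldots,x_M)$. Since $\rho_r(\top)=0$ for all $r$ (this holds for $T$ by construction of a valid thread, as noted in Cor.~\ref{cor:deadsymmetric}), and all capacities are $\geq 2$, Proposition~\ref{prop:ifchoicethendead} applies: there is a $k\in[1:M]$ such that $(x_k,x_1,\ldots,x_M)\in T^{M+1}=T^N$ is a potential deadlock. This contradicts the hypothesis. Hence $T^M$ has no local choice points, and by the last statement of Theorem~\ref{thm:cutoffchoicepoint} (whose proof invokes Prop.~\ref{prop:di1connected}), $T^n$ is serializable for all $n$.

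The only slightly delicate point — and the place I would write carefully — is the verification that the capacity-$1$ exclusion really goes through at level $N$ rather than at level $2$: one must check that padding the two-coordinate choice point $(P\tilde r, P\tilde r)$ with $N-2$ top coordinates still yields a valid state of $T^N$ and still satisfies all three bullets of Definition~\ref{def:potentialdead} (the resource use of $\tilde r$ is unchanged by the padding since nothing is held at $\top$, and every padded coordinate equals $\top$ so imposes no constraint). Given that $N-2=\Sigma_{r\in\mathcal{R}}\kappa(r)\geq 1$ whenever $\mathcal{R}$ is nonempty, there is room to do this padding, so the argument is clean. Everything else is a direct citation of the two quoted results, so there is no real computational obstacle.
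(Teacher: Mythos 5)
Your main chain --- the contrapositive of Prop.~\ref{prop:ifchoicethendead} to pass from ``no potential deadlocks in $T^N$'' to ``no local choice points in $T^M$'', followed by the last statement of Thm.~\ref{thm:cutoffchoicepoint} --- is exactly the paper's proof, which consists of citing precisely those two results. So the core of your argument is the intended one, and your instinct that the capacity-$1$ case needs separate attention (since Prop.~\ref{prop:ifchoicethendead} is stated only for $\kappa\geq 2$) is sound; the paper itself does not address it.

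However, the claim you use to dispose of that case is false. You assert that $(P\tilde r,P\tilde r,\top,\ldots,\top)$ is a potential deadlock in $T^N$ with $\rho_{\tilde r}=1=\kappa(\tilde r)$. Under the paper's convention (the remark after Def.~\ref{def:geommodel}, and as used in the proof of Lem.~\ref{lem:deadlock}), a resource is \emph{requested but not yet held} at the coordinate $P\tilde r$, and a valid thread cannot already hold $\tilde r$ when it stands at its own request for $\tilde r$; hence $\rho_{\tilde r}$ at that state is $0$, not $1$, and the third bullet of Def.~\ref{def:potentialdead} fails. The implication you would need (local choice point in $T^M$ implies potential deadlock in $T^{M+1}$) is genuinely false for capacity-$1$ resources: for $T=P_aV_a$ with $\kappa(a)=1$, every $T^n$ ($n\geq 2$) contains the local choice point $(P_a,P_a,\top,\ldots,\top)$, yet no $T^n$ contains any potential deadlock. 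Consequently the hypothesis ``no potential deadlocks in $T^N$'' does \emph{not} force all capacities to be at least $2$, and your reduction to Prop.~\ref{prop:ifchoicethendead} is unjustified whenever some $\kappa(r)=1$. Closing that case requires a different argument (e.g.\ via Thm.~\ref{thm:serializabilitykapacity2} when \emph{all} capacities are $1$) or an explicit standing assumption $\kappa\geq 2$, matching the hypotheses of Thm.~\ref{thm:choicegeneral} and Prop.~\ref{prop:ifchoicethendead}.
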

\begin{proof}
Thm.~\ref{thm:cut-offchoicepoint} and Prop.~\ref{prop:ifchoicethendead}. 
\end{proof}
Such a potential deadlock in $T^n$  is well understood in the geometric model: It is a minimum of an intersection of $N$  forbidden $n$-rectangles. To be a true deadlock, it would have to be outside the forbidden region  - i.e., not in the interior of any other rectangles - and also reachable. The deadlock algorithm in \cite{Fajstrup2016} pp.96-97 can be applied for this. 
\begin{remark} The potential deadlock arising from a local choice point may be in the forbidden region - if $x_k$ holds a resource $r\neq\tilde{r}$, which is requested by some $x_i$,  then it is already locked to its full capacity by $\mathbf{x}$ and hence, $\rho_r((x_k,x_1,\ldots,x_n)=\kappa(r)+1$. Hence, Thm. \ref{thm:deadgeneral} does not apply to potential deadlocks.
\end{remark}
\section{Comparison with other work and notions of cut-off.}
 Our use of the term cut-off is related to, but not the same as the cut-off for unfolding of Petri nets in \cite{Esparza}. Their cut-off is a minimal complete prefix which represents all possible unfoldings of loops and branchings.  In \cite{LFMSCS} we prove a result closer to that, namely that there are finite cut-offs for unfolding of nested loops in the $PV$-model, when the aim is to find deadlocks and states from which the program cannot finish.  I.e., the question is how many times one should unfold the loops in a parallel program, where the threads and in particular the number of threads is fixed. The present results are not about unfoldings. Here the number of threads is the parameter. 

In \cite{EmersonNamjoshi95} a cut-off is indeed a number of threads. However, the setting is passing a token around a ring and only one thread may proceed at a time. Here, we allow for general semaphores - resources may have capacity  more than $1$ and also we allow true concurrency. In \cite{NamjoshiTrefler12}, \cite{NamjoshiTrefler16}, \cite{NamjoshiTrefler18} there is a network of threads - the neighbor relation in the network gives the interdependence of threads. The symmetry considerations are based on symmetry of these networks.  Our work here corresponds to the complete graph in that setting in the sense that all threads in $T^n$ will interact pairwise and, in the case of general semaphores, the interaction is not only pairwise.
Our cut-offs allow us, even for our complete graphs, to consider only $M$ threads, where $M$ is the cut-off.  Similarly, our local, neighborhood states will be global in their sense. Our local obstructions are local, not in the sense of considering fewer threads but in the sense of not considering a total execution path. The local properties at a node in the sense of \cite{NamjoshiTrefler18} are related to our local obstructions. Moreover, the compositionality in \cite{NamjoshiTrefler18} is in a way related to our Prop.~\ref{prop:deadcombo}, but Rem.~\ref{rem:compositionality} is a warning against a too simple minded decomposition. 

In \cite{GermanSistla} they provide automatic verification of certain systems with an arbitrary number of a given process run in parallel. They do not allow more than $2$ processes to execute at the same time, and an execution such as the dash-dotted one in Fig.~\ref{fig:serialize} would not be allowed.  Their methods may still be applicable in our setting, but it seems unclear how to handle semaphores which are not mutual exclusion and also in what way their methods would behave under subdivision of the actions.  Our results are more similar to higher dimensional automata, HDA \cite{Pratt1} in the sense of allowing more actions in parallel. The dash-dotted execution however, is not a priori in the HDA built from that $PV$-program. In  the words of V.~Pratt \cite{Pratt}, the $PV$-model allows a continuous change from $a$ before $b$ to $b$ before $a$ (the dihomotopies),  which an HDA does not, so again, care is needed. 

\section{Conclusions and further work.}
For deadlocks, the cut-off at $M=\Sigma_{r\in\mathcal{R}}\kappa(r)$ is sharp, i.e.,  $T^M$ is deadlock free if and only if $T^n$ is deadlock free for all $n$. Serializability is guaranteed if there are no local choice points in $T^M$ with $M=\Sigma_{r\in\mathcal{R}}\kappa(r)+1$. This cut-off is not sharp, but there is a lower bound on the cut-off at $M-2$. Freedom from local choice points is a sufficient condition for serializability, but not necessary. 
The highly symmetric case considered here has not been studied in the geometric setting before and this is just a beginning. In algebraic topology there is a vast literature on symmetry and equivariance, which should be brought into directed topology and be applied to the study of these and other symmetric situations. A group action in the directed topology sense would provide automorphisms which preserve not only executions but equivalences of executions: Directed paths would map to directed paths and directed homotopies - i.e. equivalence of executions - would map to directed homotopies. The symmetry considerations and indeed group actions  in e.g.\cite{EmersonSistla}, \cite{ClarkeEmersonSistla} may be used on top of, or instead of these directed symmetries. In Rem.~\ref{rem:compositionality} such symmetry considerations were used for the case of deadlocks. When considering serializability and equivalence of executions, this is much more subtle - permuting coordinates will certainly permute execution paths, but it is not clear how to choose a representative of such an orbit in a way such that equivalences of executions give rise to equivalences of representatives. 
\acknowledgement{It is a pleasure to thank Roy Meshulam, Martin Raussen and Krzysztof Ziemiansky for very helpful discussions. Moreover, the reviewers have given very helpful references and suggestions and made this a better paper. }
\bibliographystyle{amsplain}
\bibliography{cutoff}

\providecommand{\bysame}{\leavevmode\hbox to3em{\hrulefill}\thinspace}
\providecommand{\MR}{\relax\ifhmode\unskip\space\fi MR }
\providecommand{\MRhref}[2]{%
  \href{http://www.ams.org/mathscinet-getitem?mr=#1}{#2}
}
\providecommand{\href}[2]{#2}
\begin{thebibliography}{10}

\bibitem{AbdullaDelzanno}
P.A. Abdulla and G.~Delzanno, \emph{Parameterized verification}, Int J Softw
  Tools Technol Transfer \textbf{18} (2016), 469--473.

\bibitem{CR87}
S.D. Carson and P.F. Reynolds, \emph{The geometry of semaphore programs}, ACM
  TOPLAS \textbf{9} (1987), no.~1, 25--53.

\bibitem{Papa79}
C.H.Papadimitriou, \emph{The serializability of concurrent database updates},
  Journal of the Association for Computing Machinery \textbf{26} (1979), no.~4.

\bibitem{ClarkeEmersonSistla}
E.~M. Clarke, E.~A. Emerson, S.~Jha, and A.~P. Sistla, \emph{Symmetry
  reductions in model checking}, Computer Aided Verification (Berlin,
  Heidelberg) (Alan~J. Hu and Moshe~Y. Vardi, eds.), Springer Berlin
  Heidelberg, 1998, pp.~147--158.

\bibitem{D64}
E.W. Dijkstra, \emph{Over seinpalen (about semaphores)}, In the Dijkstra
  Archive ED74. In Dutch., 1962-1964.

\bibitem{D68}
\bysame, \emph{Co-operating sequential processes}, Programming Languages
  (F.~Genuys, ed.), Academic Press, New York, 1968, pp.~43--110.

\bibitem{EmersonNamjoshi95}
E.A.Emerson and K.S.Namjoshi, \emph{Reasoning about rings}, POPL95: Proceedings
  of the 22nd ACM SIGPLAN-SIGACT symposium on Principles of programming
  languages., 1995, pp.~85--94.

\bibitem{EmersonSistla}
E.A.Emerson and A.P. Sistla, \emph{Symmetry and model checking}, Formal Methods
  in System Design \textbf{9} (1996), 105--131.

\bibitem{Esparza}
J.~Esparza, S:~R\"{o}mer, and W.~Vogler, \emph{An improvement of {M}c{M}illan's
  unfolding algorithm}, Formal Methods in System Design \textbf{20} (2002),
  285--310.

\bibitem{LFMSCS}
L.~Fajstrup, \emph{Loops, ditopology, and deadlocks}, Mathematical Structures
  in Computer Science \textbf{10(4)} (2000).

\bibitem{FGHMR2012}
L.~Fajstrup, E.Goubault, E.~Haucourt, S.~Mimram, and M.~Raussen, \emph{Trace
  spaces: An efficient new technique for state-space reduction}, Programming
  Languages and Systems. 21st European Symposium on Programming, ESOP 2012,
  Lect. Notes Comp. Sci., vol. 7211/2012, Springer Verlag, 2012, pp.~274--294.

\bibitem{LFEGMRAlgebraic}
L.~Fajstrup, E.~Goubault, and M.~Raussen, \emph{Algebraic {T}opology and
  {C}oncurrency}, Theoretical Computer Science \textbf{357} (2006), 241--278.

\bibitem{Fajstrup2005}
Lisbeth Fajstrup, \emph{Dipaths and dihomotopies in a cubical complex},
  Advances in Applied Mathematics \textbf{35} (2005), no.~2, 188--206.

\bibitem{Fajstrup2016}
Lisbeth Fajstrup, Eric Goubault, Emmanuel Haucourt, Samuel Mimram, and Martin
  Raussen, \emph{Directed algebraic topology and concurrency}, Springer, 2016,
  With a foreword by Maurice Herlihy. \MR{3445318}

\bibitem{GermanSistla}
Steven~M. German and A.~Prasad Sistla, \emph{Reasoning about systems with many
  processes},  \textbf{39} (1992), no.~3.

\bibitem{NamjoshiTrefler12}
K.S.Namjoshi and R.J.Trefler, \emph{Local symmetry and compositional
  verification}, Verification, Model Checking, and Abstract Interpretation.
  13th International Conference, VMCAI 2012, Philadelphia, PA, Lect. Notes
  Comp. Sci., vol. 7148/2012, Springer Verlag, 2012, pp.~348-- 362.

\bibitem{NamjoshiTrefler16}
Kedar~S. Namjoshi and Richard~J. Trefler, \emph{Parameterized compositional
  model checking}, Tools and Algorithms for the Construction and Analysis of
  Systems (Berlin, Heidelberg) (Marsha Chechik and Jean-Fran{\c{c}}ois Raskin,
  eds.), Springer Berlin Heidelberg, 2016, pp.~589--606.

\bibitem{NamjoshiTrefler18}
\bysame, \emph{Symmetry reduction for the local mu-calculus}, Tools and
  Algorithms for the Construction and Analysis of Systems (Cham) (Dirk Beyer
  and Marieke Huisman, eds.), Springer International Publishing, 2018,
  pp.~379--395.

\bibitem{Pratt1}
V.~Pratt, \emph{Modeling concurrency with geometry}, Proceedings of the 18th
  ACM SIGPLAN-SIGACT Symposium on Principles of Programming Languages (New
  York, NY, USA), POPL '91, Association for Computing Machinery, 1991,
  p.~311–322.

\bibitem{Pratt}
\bysame, \emph{Higher dimensional automata revisited}, Mathematical Structures
  in Computer Science \textbf{10} (2000), no.~4, 525–548.

\bibitem{MRMSCS}
M.~Raussen, \emph{On the classification of dipaths in geometric models for
  concurrency}, Mathematical Structures in Computer Science \textbf{10(4)}
  (2000).

\bibitem{Raussen2006}
\bysame, \emph{Deadlocks and dihomotopy in mutual exclusion models},
  Theoretical Computer Science \textbf{365} (2006), no.~3, 247--257.
  \MR{2269456}

\bibitem{Raussen2010b}
M.~Raussen, \emph{Simplicial models of trace spaces}, Algebraic and Geometric
  Topology \textbf{10} (2010), 1683--1714.

\bibitem{LipPap81}
Jr. W.~Lipski and C.H.Papadimitriou, \emph{A fast algorithm for testing for
  safety and detecting deadlocks in locked transaction systems}, Journal of
  Algorithms \textbf{2} (1981), 211--226.

\bibitem{Yannakakis1979}
M.~Yannakakis, C.H. Papadimitriou, and H.T. Kung, \emph{Locking policies:
  Safety and freedom from deadlock.}, Proceedings, 1979 FOCS Conference, 1979,
  pp.~286--297.

\end{thebibliography}

\end{document}